\newcommand{\DisplayNote}[1]{\hspace{.5cm}({\bf{#1}})}
\newcommand{\ResearchNote}[1]{\hspace{.5cm}}
\newcommand{\UnDisplayNote}[1]{}
\newcommand{\R}{{\mathbb{R}}}
\newcommand{\rplus}{{\mathbb{R}^{+}}}
\newcommand{\rpluscl}{\overline{\mathbb{R}}^{+}}
\newcommand{\C}{\mathbb{C}}
\newcommand{\cplus}{\overline{\mathbb{C}^{+}}}
\newcommand{\Z}{\mathbb{Z}}
\newcommand{\AB}     {{A\& B}}                  
\newcommand{\CDop}     {{C\& D}}                  
\newcommand{\Xscr} {{\mathcal X}}
\newcommand{\Yscr} {{\mathcal Y}}
\newcommand{\Uscr} {{\mathcal U}}
\newcommand{\Zscr} {{\mathcal Z}}
\newcommand{\bbm}[1]{\left[\begin{matrix} #1 \end{matrix}\right]}
\newcommand{\sbm}[1]{\left[\begin{smallmatrix} #1
             \end{smallmatrix}\right]}
\newcommand{\IOhat}{\mathbf{G}}
\newcommand{\SmallSysNode}{\sbm{\AB \cr \CDop}}
\newcommand{\abs}[1]{\left \vert #1 \right \vert}
\newcommand{\Dom}[1]{{\rm dom}\left (#1 \right )}
\newcommand{\Null}[1]{{\rm ker} \left (#1 \right )}
\newcommand{\rst}[1]{\big{|}_{#1}}
\newtheorem{theorem}{Theorem}[section]
\newtheorem{proposition}[theorem]{Proposition}
\newtheorem{definition}[theorem]{Definition}
\newtheorem{example}[theorem]{Example}
\newtheorem{remark}[theorem]{Remark}
\begin{document}

\title{Notes on glottal flow \\ and acoustic inertial effects}


\author{Jarmo~Malinen}



\date{11.11.2019}

\maketitle

\bibliographystyle{plain}

\begin{abstract}
  This text is a compilation of some of the notes that the author has
  written during the development of the low-order model ``DICO''
  \cite{A-A-M-M-V:MLBVFVTOPI,M-A-M-A-V:MLBVFOVTA,M-M:WPPGNVTR,M-M:IMBGSVTPG}
  for vowel phonation and the even more rudimentary glottal flow model
  \cite{M-A-M-G:PPMGFUIFHSV} for processing high-speed glottal video
  data.

  The following subject matters are covered: \rm{(i)}
  Incompressible, laminar, lossless flow models for idealised
  rectangular and wedge shape vocal fold geometries.  Equations of
  motion and the pressure distribution are computed in a closed form
  for each model using the unsteady Bernoulli's theorem; \rm{(ii)}
  The assumption of incompressibility and energy loss (i.e.,
  irrecoverable pressure drop) of the airflow in airways (including
  the glottis) is discussed using steady compressible Bernoulli
  theorem as the main tool; \rm{(iii)} Inertia of an uniform
  waveguide is studied in terms of the low-frequency limit of the the
  (acoustic) impedance transfer function. It is observed that the
  inductive loading in the boundary condition sums up with the
  waveguide inertance in an expected way; \rm{(iv)} It is shown
  that an acoustic waveguide, modelled by Webster's lossless equation
  with Dirichlet boundary condition at the far end, will produce the
  expected mass inertance of the fluid column as the low-frequency
  limit of the impedance transfer function.
\end{abstract}



\section{\label{IntroSec} Introduction}

This text is a cleaned-up compilation of notes that have been written
during the development of the low-order model ``DICO''
\cite{A-A-M-M-V:MLBVFVTOPI,M-A-M-A-V:MLBVFOVTA,M-M:WPPGNVTR,M-M:IMBGSVTPG}
for vowel phonation and the even more rudimentary glottal flow model
\cite{M-A-M-G:PPMGFUIFHSV} for processing high-speed glottal video
data.  While it is not possible to include all details and derivations
in journal articles, building even a modest model of phonation will
require a great number of mathematical and physical considerations,
idealisations, and approximations. There exists a number of fora (such
as arXiv.org) where complementary material can be presented almost
without any limitations, making excuses of obscurity somewhat moot
nowadays. Unfortunately, much of the background material of the above
mentioned publications is not included here due to shortness of time
and other resources of the author.  So much for the justification of
this text; let's move on.\footnote{The author has tried his best not
  to leave any obvious mistakes (mathematical, or of some other kind)
  in the material. If an interested reader makes observations, the
  author is pleased to receive comments by e-mail:
  jarmo.malinen@aalto.fi.}

Briefly, the subject of these notes is an air column inside a
perfectly smooth, acoustically reflecting tubular boundary. The air
column is both translating (when it is considered incompressible) as
well as it is an acoustic medium (where compressibility is a
prerequisite for the finite speed of sound).  The tube, i.e., the flow
channel consists of three parts having finite lengths: the
\emph{subglottal tract} (SGT), the \emph{glottis}, and the \emph{vocal
  tract} (VT). The much shorter glottis is positioned between SGT and
VT, and it is the only part of the tube boundary that is assumed to be
time-dependent. The walls of the tube at the glottis are called
\emph{vocal folds}. As is sometimes required, the SGT may be
considered as having been extended from its free end by a
\emph{piston} made of incompressible material (even, perhaps, fluid)
that has a much higher density than air. The mass and the dimensions
of the piston add to the total (flow-mechanical part of the) mass
inertia if they are taken into consideration. The acoustics is
considered only in the VT part using Webster's model, hence
restricting it to an immobile boundary.

An outline of these notes is as follows: In
Sections~\ref{RectangularSec}~and~\ref{WedgeSec}, incompressible,
laminar, lossless flow models are developed for two kinds of idealised
glottis geometries: \emph{rectangular} and \emph{wedge} shape vocal
folds.  Equations of motion and the pressure distribution are computed
in a closed form for each model using the unsteady Bernoulli's
theorem. This is motivated by, and only feasible due to the extremely
simplified nature of the glottis geometry.  From these equations, the
coefficient of \emph{inertance} shows up for each of the three parts
of the flow channel, and their sum -- the total inertance -- regulates
the mass inertial effects in the fluid movement.

In Section~\ref{CompressibleSec}, the assumption of incompressibility
is examined from the point of view of the flow. Constriction areas and
thermodynamic state are computed for glottal openings where the
compressible \emph{steady} flow would reach Mach 0.3 (often considered
as the upper limit for air flow to be treatable as incompressible) and
Mach 1.0. In Section~\ref{WaveGuideSec}, the VT inertance discovered
in the flow models of Sections~
\ref{RectangularSec}~and~\ref{WedgeSec} is associated to acoustics
using the uniform diameter acoustic waveguide as a model. It is, in
particular, observed that the inductive acoustic termination at the
waveguide end (mouth) will add to the inertance of the acoustic
system. That the same holds for general acoustic waveguides with
nonconstant intersectional areas is indicated in
Section~\ref{WebsterSec}.

The lossless, incompressible models introduced in
Sections~\ref{RectangularSec}~and~\ref{WedgeSec} are not as such
suitable for simulation modelling of vowels even when combined with VT
and SGT acoustics models. The two main reasons are the following:
\begin{enumerate}
  \item For performance reasons, it is desirable to have some
    \emph{pruning of terms} in the equations of motion (see
    Eqs.~\eqref{BoxGlottisEquationOfMotionVer1}~and~\eqref{VolumeVelocityEq})
    as well as in the equations of the hydrodynamic pressure
    components (see
    Eqs.~\eqref{BoxGlottisEquationOfPressureVer1}~and~\eqref{FormulaForPressureVolumeVer1})
    resulting in the aerodynamic force to the vocal folds (not
    considered in these notes). Pruning of a term is considered
    acceptable if its effect can be compensated by tuning of the model
    parameters in numerical simulations.
\item For accuracy reasons, some sort of (unrecoverable) pressure loss
  terms must be added.  Such pressure loss is due to fluid viscosity
  or various other entrance-exit effects not accounted for by the
  classical Hagen--Poiseuille's law.  Indeed, the pressure head lost
  due to friction cannot accelerate the column during the glottal open
  phase, and the flow at the glottal constriction is viscosity
  dominated right before the moment of closure. Incidentally, the
  glottal inertance is singular at the moment of closure as well.
\end{enumerate}



To conclude, the underlying story of these notes is the inertance that
makes appearance not only in the (incompressible) flow model for the
fluid column acceleration but in the equations of the acoustics of the
same fluid volume. All acoustic inertia in the proposed system is also
flow mechanical inertia but this is not true conversely: The inertia
of the piston (i.e., moving tissues during exhalation) as well as the
air jet separating from the lips are parts of the flow mechanical
loading, only.

The reader of these semi-informal notes should be warned that this
text is not, neither will it be, a proper scientific article. Many
necessary attributes of scientific articles are missing, including
much of the wider scientific context and all references to works of
other authors.




\section{\label{RectangularSec} Rectangular glottis}

In this section, we consider a rectangular glottis where an
incompressible, laminar\footnote{\ldots and what a convenient assumption this
  is!}, lossless flow takes place. The length of the SGT, glottis, and
VT denoted by $L_{SGT}, L_G, L_{VT}$, respectively.  The subintervals
$[-L_{SGT}, 0)$, $[0, L_G)$, and $[L_G, L_G + L_{VT}]$ denote these
    parts of the flow channel, and the flow intersection area
    $A(\cdot)$ is assumed to be time-dependent only on $[0, L_G)$.

\subsection{\label{ElementarySubSec} An elementary treatment}

The velocity, given the time-variant volume velocity $U = U(t)$
\begin{equation*}
  v(x,t) = 
  \begin{cases}
    & \frac{U(t)}{A} \quad \text{ for } x \in [-L_{SGT}, 0), \\
    & \frac{U(t)}{h g(t)} \quad \text{ for } x \in [0, L_{G}), \\
    & \frac{U(t)}{A} \quad \text{ for } x \in [L_{G}, L_{G} + L_{VT}). \\
  \end{cases}
\end{equation*}
The corresponding velocity potential, remembering its continuity, is given by
\begin{equation*}
  \psi(x,t) = 
  \begin{cases}
    & \frac{U(t)x}{A} \quad \text{ for } x \in [-L_{SGT}, 0), \\
    & \frac{U(t)x}{h g(t)} \quad \text{ for } x \in [0, L_{G}), \\
    & U(t)\left [\frac{x - L_G}{A} + \frac{L_G}{h g(t)} \right ] \quad \text{ for } x \in [L_{G}, L_{G} + L_{VT}). \\
  \end{cases}
\end{equation*}
The time derivative is given by
\begin{equation*}
\frac{\partial \psi}{ \partial t}(x,t) = 
  \begin{cases}
    & \frac{U'(t)x}{A} \quad \text{ for } x \in [-L_{SGT}, 0), \\
    & \frac{x}{h} \frac{d}{dt} \left( \frac{U(t)}{g(t)} \right ) \quad \text{ for } x \in [0, L_{G}), \\
    & \frac{x - L_G}{A} U'(t) + \frac{L_G}{h} \frac{d}{dt} \left( \frac{U(t)}{g(t)} \right)  \quad \text{ for } x \in [L_{G}, L_{G} + L_{VT}). \\
  \end{cases}
\end{equation*}
Unsteady Bernoulli at points $x = -L_{SGT}$ with pressure $p_s = p_s(t)$
and $x = L_G + L_{VT}$ with ambient pressure $p_{amb} = 0$:
\begin{equation*}
  -\frac{U'(t) L_{SGT}}{A} + \frac{1}{2} \left (\frac{U(t)}{A} \right )^2 + \frac{p_s(t)}{\rho}
  = \frac{L_{VT} + L_G - L_G}{A} U'(t) + \frac{L_G}{h} \frac{d}{dt} \left( \frac{U(t)}{g(t)} \right) + 
  \frac{1}{2} \left (\frac{U(t)}{A} \right )^2, 
\end{equation*}
or, equivalently, 
\begin{equation*}
 \frac{p_s(t)}{\rho}  = \frac{L_{SGT} + L_{VT}}{A} U'(t) + \frac{L_G}{h} \frac{d}{dt} \left( \frac{U(t)}{g(t)} \right). 
\end{equation*}
This is exactly the original version of the ``lossless model'' for the
rectangular glottis, given in \cite{M-A-M-G:PPMGFUIFHSV}.

\subsection{\label{ElegantSubSec} A more elegant version}

It is desirable to carry out the computation so that the inertance of
the full fluid column is treated in an unified manner. We show next
that, in fact,
\begin{equation} \label{EquationOfMotion}
 p_s(t) = \frac{d}{dt} \left( C_{iner}(t) U(t) \right)
\end{equation}
where 
\begin{equation} \label{RectangularInertance}
\begin{aligned}
  C_{iner}(t) & = \rho \int_{0}^{L_{SGT}} {\frac{ds}{A_{SGT}(s)}} + \rho \int_{0}^{L_{VT}} {\frac{ds}{A_{VT}(s)}} + \frac{\rho L_G}{h g(t)} \\
  & = C_{iner}^{(SGT)} + C_{iner}^{(VT)} + C_{iner}^{(G)}(t) \text{ with } C_{iner}^{(G)}(t) := \frac{\rho L_G}{h g(t)}
\end{aligned}
\end{equation}
is the \emph{total inertance} of the subglottal tract, vocal tract, and the interglottal volume.
Note that
\begin{equation} \label{InertanceDerivative}
  h C_{iner}'(t) g(t) = -\frac{\rho L_G g'(t)}{g(t)^2} g(t) = -\frac{\rho L_G g'(t)}{g(t)} = - \rho L_G \frac{d \ln{g(t)} }{dt}.
\end{equation}

Our assumptions are that the VT and SGT area functions $A_{VT}(s)$ and
$A_{SGT}(s)$ have their glottal ends at $s = L_G$, $s = 0$,
respectively, and that
\begin{equation*}
  \lim_{s \to L_{VT}}{A_{VT}(s)} =   \lim_{s \to L_{SGT}}{A_{SGT}(s)} = \infty
\end{equation*}
leading to stagnation at both of these ends.  For the incompressible
flow, the velocity is now given by
\begin{equation*}
  v(x,t) = 
  \begin{cases}
    & \frac{U(t)}{A_{SGT}(-x)} \quad \text{ for } x \in (-L_{SGT}, 0), \\
    & \frac{U(t)}{h g(t)} \quad \text{ for } x \in [0, L_{G}), \\
    & \frac{U(t)}{A_{VT}(x - L_G)} \quad \text{ for } x \in [L_{G}, L_{G} + L_{VT}). \\
  \end{cases}
\end{equation*}
The corresponding velocity potential, remembering its continuity, is given by
\begin{equation*}
  \psi(x,t) = 
  \begin{cases}
    & - U(t) \int_{x}^{0}\frac{ds}{A(-s)} \quad \text{ for } x \in [-L_{SGT}, 0), \\
    & \frac{U(t)x}{h g(t)} \quad \text{ for } x \in [0, L_{G}), \\
    & U(t) \left [\frac{L_G}{h g(t)} + \int_{L_G}^x \frac{ds}{A_{VT}(s - L_G)} \right ] \quad \text{ for } x \in [L_{G}, L_{G} + L_{VT}). \\
  \end{cases}
\end{equation*}
For $-L_{SGT} < x_1 < 0 < L_G < x_2 < L_{G} + L_{VT}$, the pressure drop satisfies by the unsteady Bernoulli equation
\begin{equation*}
  \frac{\partial \psi}{ \partial t}(x_1,t) + \frac{1}{2}\left (\frac{U(t)}{A_{SGT}(-x_1)} \right )^2 +  \frac{p(x_1,t)}{\rho}
=   \frac{\partial \psi}{ \partial t}(x_2,t) + \frac{1}{2}\left (\frac{U(t)}{A_{VT}(x_2)} \right )^2 +  \frac{p(x_2,t)}{\rho};
\end{equation*}
that is, 
\begin{equation*}
  p(x_1,t) - p(x_2,t)
  =   \rho \frac{\partial }{ \partial t} \left ( \psi (x_2,t) - \psi (x_1,t) \right ) 
  +  \frac{\rho}{2}\left (\frac{U(t)}{A_{VT}(x_2)} \right )^2  - \frac{\rho}{2}\left (\frac{U(t)}{A_{SGT}(-x_1)} \right )^2.
\end{equation*}
Denoting the stagnation pressures $p_s(t) = \lim_{x_1 \to - L_{SGT}}
p(x_1,t) $ and $p_{amb}(t) = \lim_{x_2 \to L_{SGT}} p(x_2,t)$ at the
infinitely wide ends of the tube, we get by taking the limits at the
both ends
\begin{equation*}
\begin{aligned}
  p_s(t) - p_{amb}(t)
  & =   \rho \frac{\partial }{ \partial t} \left ( 
U(t) \left [\frac{L_G}{h g(t)} + \int_{L_G}^{L_G + L_{VT}} \frac{ds}{A_{VT}(s - L_G)} \right ]
+ U(t) \int_{-L_{SGT}}^{0}\frac{ds}{A(-s)} \right ) \\
& =  \frac{\partial }{ \partial t} \left ( U(t) C_{iner}(t) \right )
\end{aligned}
\end{equation*}
by change of variables. Assuming that $p_{amb}(t) = 0$ is a constant
reference ambient pressure level, the result
Eq.~\eqref{EquationOfMotion} follows.

\begin{remark}
For later comparison, Eq.~\eqref{EquationOfMotion} can we written in
terms of the velocity at the glottal opening $v(t) = U(t)/hg(t)$ as
\begin{equation*} 
\begin{aligned}
 v'(t)  & =   \frac{1}{C_{iner}(t) h g(t)} \left ( p_s(t) - h \left ( C_{iner}(t) g'(t) + C_{iner}'(t) g(t) \right ) v(t) \right ) \\
 & = \frac{1}{C_{iner}(t) h g(t)} \left ( p_s(t) - h g'(t) \left (C_{iner}(t)  - \frac{\rho L_G}{h g(t)} \right ) v(t) \right )
\end{aligned}
\end{equation*}
Thus we get
\begin{equation} \label{BoxGlottisEquationOfMotionVer1}
\begin{aligned}
 v'(t) & = \frac{1}{ C_{iner}(t)h g(t)} \left ( p_s(t) - h g'(t) C_{iner}^{(TOT)} v(t) \right )
\end{aligned}
\end{equation}
where $C_{iner}^{(TOT)} := C_{iner}^{(SGT)} + C_{iner}^{(VT)}$ is the
total inertance \emph{excluding} the time-dependent glottis.
\end{remark}

\subsection{\label{RectangularSolutionSubSec} Solving the glottal flow and pressure}
By integration of Eq.~\eqref{EquationOfMotion},
\begin{equation*}
C_{iner}(t) U(t)  - C_{iner}(0) U(0) = \int_0^t {p_s(\tau) \, d \tau}.
\end{equation*}
Assuming that $U(0) = 0$ (which is reasonable by fixing the opening
point in time) we get $U(t) = \frac{1}{C_{iner}(t)} \int_0^t
{p_s(\tau) \, d \tau}$.  For the velocity in the rectangular glottal
channel we get
\begin{equation} \label{VelocityGlottis}
  v(t) = \frac{1}{C_{iner}(t) h g(t)} \int_0^t {p_s(\tau) \, d \tau}.
\end{equation}
Using again the unsteady Bernoulli at $- L_{SGT} < x_1 < 0$ and $x \in [0, L_G]$,  we get
\begin{equation*}
  p(x_1, t)  - p(x, t) 
  =   \rho \frac{\partial }{ \partial t} \left ( \psi \left (x,t \right ) - \psi (x_1,t) \right ) 
  +  \frac{\rho}{2} v(t)^2 - \frac{\rho}{2}\left (\frac{U(t)}{A_{SGT}(x_1)} \right )^2.  
\end{equation*}
Taking the limit $x_1 \to - L_{SGT}$ and noting the stagnation to pressure $p_s(t)$, we get
\begin{equation*}
  p_s(t)  - p(x, t) 
  =   \frac{\partial}{\partial t} U(t)  \left ( \frac{\rho  x}{h g(t)} 
  +  \rho  \int_{-L_{SGT}}^{0}{\frac{ds}{A(-s)}} \right ) 
  +  \frac{\rho}{2} v(t)^2.
\end{equation*}
Thus, the pressure in the glottis $x \in [0, L_G]$ is given by
\begin{equation*}
\begin{aligned}
   p(x, t) 
   & = p_s(t)  -  \frac{\rho}{2} v(t)^2 
   -  \frac{\partial }{ \partial t} U(t)  \left ( \frac{\rho x }{h g(t)} +  C_{iner}^{(SGT)}\right ) \\
   & = p_s(t)  -  \frac{\rho}{2} v(t)^2 - C_{iner}^{(SGT)} U'(t)
   -  \rho x \frac{\partial }{ \partial t}  \left ( \frac{U(t) }{h g(t)}   \right ) \\
   & = p_s(t)  -  \frac{\rho}{2} v(t)^2  -  \rho x  v'(t)  - C_{iner}^{(SGT)} U'(t).
\end{aligned}
\end{equation*}
Writing $U'(t) = h \left (g'(t) v(t) + g(t)v'(t) \right )$ yields the first version of the equations for the pressure:
\begin{equation} \label{BoxGlottisEquationOfPressureVer1}
\begin{aligned}
   p(x, t) & = p_s(t) - \frac{\rho}{2} v(t)^2 - C_{iner}^{(SGT)} U'(t)
   - \rho x \frac{\partial }{ \partial t} \left ( \frac{U(t) }{h g(t)}
   \right ) \\ & = p_s(t) - \frac{\rho}{2} v(t)^2 - \left (\rho x + h
   C_{iner}^{(SGT)} g(t) \right ) v'(t) - h C_{iner}^{(SGT)} g'(t)
   v(t).
\end{aligned}
\end{equation}
The third term on the RHS relates to the increase of pressure at the
narrowing due to deceleration (i.e., when $v'(t) < 0$).

Now, it is possible to eliminate the $v'(t)$ term from
Eq.~\eqref{BoxGlottisEquationOfPressureVer1}.  Inserting
Eq.~\eqref{BoxGlottisEquationOfMotionVer1} gives for $x \in [0, L_G)$
\begin{equation} \label{BoxGlottisEquationOfPressureVer2}
\begin{aligned}
   p(x, t) & = p_s(t) - \frac{\rho}{2} v(t)^2  - h C_{iner}^{(SGT)} g'(t) v(t) \\
   & - \left (\rho x + h C_{iner}^{(SGT)} g(t) \right ) \cdot \frac{1}{h C_{iner}(t) g(t)} \left ( p_s(t) - h g'(t) C_{iner}^{(TOT)} v(t) \right ) \\
   & = \left [1 - \frac{\rho x + h C_{iner}^{(SGT)} g(t)}{h C_{iner}(t) g(t)} \right ] p_s(t) - \frac{\rho}{2} v(t)^2   \\
   & + \left [  \frac{\left (\rho x + h C_{iner}^{(SGT)} g(t) \right ) \left ( h g'(t) C_{iner}^{(TOT)} \right )}{h C_{iner}(t) g(t)} - h C_{iner}^{(SGT)} g'(t)  \right ]  v(t) \\
   & = \left [1 - \frac{C_{iner}^{(SGT)} + \frac{\rho x}{h g(t)} }{ C_{iner}(t)} \right ] p_s(t) - \frac{\rho}{2} v(t)^2   \\ 
   & \quad \quad \quad \quad + \frac{\rho  g'(t)}{C_{iner}(t) g(t)} \left [C_{iner}^{(TOT)} x -  C_{iner}^{(SGT)} L_G     \right ]  v(t).
\end{aligned}
\end{equation}

A few concluding remarks are now in order.  The first two terms on the
RHS of the top row in Eq.~\eqref{BoxGlottisEquationOfPressureVer2} are
familiar from the steady Bernoulli principle, and one should note that
always $\frac{C_{iner}^{(SGT)} + \frac{\rho x}{h g(t)}}{C_{iner}(t)} <
1$ in the second to the last row.  One could call this number
\emph{partial inertance proportion} for an obvious reason, and that
correction remains there even in a stationary glottis.  It is unclear
to me what the last term in
Eq.~\eqref{BoxGlottisEquationOfPressureVer2} stands for but it
certainly vanishes for nonmoving glottis. The expression in the
brackets can be written as $C_{iner}^{(SGT)} (x - L_g) +
C_{iner}^{(VT)}x$.

\ResearchNote{JM: Why does it not reduce to the ordinary steady
  incompressible Bernoulli if $g'= p_s' =0$? Well, without losses this
  would amount to an accelerating flow if $p_s > 0$, not steady unless
  $p_s = 0$. Make a remark of this.}


\section{\label{WedgeSec} Wedge-like glottis}

Let us carry out similar computations as in
Section~\ref{RectangularSec} but this time for wedge-like vocal
folds. We again assume that the length of the glottis is $L_G$, and
the glottal part of the airways is the interval $[0, L_G]$.  The
smallest opening is denoted by $g = g(t) > 0$, and the opening at the
wide end is denoted by $g_0$. The narrow end is always downstream.

The glottal area function is now given by
\begin{equation*}
  A_G(x) = h \left (\frac{g(t)- g_0}{L_G} x + g_0 \right ) \text{ for } x \in [0, L_G],
\end{equation*}
giving for the glottal inertance
\begin{equation} \label{GlottalIntertanceForWedges}
  C_{iner}^{(G)}(t) = \rho \int_0^{L_G} {\frac{ds}{ A_G(s)}} =
  \frac{\rho L_G}{h(g_0 -  g(t))} \ln{\left (\frac{g_0}{g(t)} \right )}
\end{equation}
where we used the fact that 
\begin{equation*}
\begin{aligned}
   \int_{0}^x {\frac{d s}{(g(t)- g_0) s + g_0 L_G}} & = \frac{1}{g(t)-
     g_0} \ln{\frac{(g(t)- g_0) x + g_0 L_G}{g_0 L_G}} \\ & =
   \frac{1}{g(t)- g_0} \ln{\left (1 - \frac{g_0 - g(t)}{g_0 L_G} x
     \right )}.
\end{aligned}
\end{equation*}
Note that the wedge geometry gives a logarithmic singularity in $g(t)$
for $C_{iner}^{(G)}(t)$ at $g(t) = 0$. Compared to
Section~\ref{RectangularSec}, the singularity is there stronger since
$C_{iner}^{(G)}(t) = L_G/h g(t)$ for the rectangular glottis.  In any
case, it is a general fact in any geometry that the inertance of the
glottis (and hence, the total inertance of the fluid column) becomes
singular at the moment of closure.

\subsection{Equation of motion for wedge-like vocal folds}

Let us start, again, from the equation of motion for the fluid column.
For the incompressible, lossless flow, the velocity is
\begin{equation*}
  v(x,t) = 
  \begin{cases}
    & \frac{U(t)}{A_{SGT}(-x)} \quad \text{ for } x \in (-L_{SGT}, 0), \\
    & \frac{U(t) L_G}{h \left ( (g(t)- g_0) x + g_0 L_G \right )} \quad \text{ for } x \in [0, L_{G}), \\
    & \frac{U(t)}{A_{VT}(x - L_G)} \quad \text{ for } x \in [L_{G}, L_{G} + L_{VT}). \\
  \end{cases}
\end{equation*}
The corresponding velocity potential, remembering its continuity, is given by
\begin{equation*}
  \psi(x,t) = 
  \begin{cases}
    & - U(t) \int_{x}^{0}{\frac{ds}{A(-s)}} \quad \text{ for } x \in [-L_{SGT}, 0), \\
    &  \frac{L_G}{h}  \frac{U(t)}{g(t)- g_0} 
        \ln{\left (1 + \frac{g(t)- g_0}{g_0 L_G} x \right )} \quad \text{ for } x \in [0, L_{G}), \\
    & U(t) \left [\rho^{-1} C_{iner}^{(G)}(t)  
 + \int_{L_G}^x \frac{ds}{A_{VT}(s - L_G)} \right ] \quad \text{ for } x \in [L_{G}, L_{G} + L_{VT}). \\
  \end{cases}
\end{equation*}
For $-L_{SGT} < x_1 < 0 < L_G < x_2 < L_{G} + L_{VT}$, the pressure drop satisfies by the unsteady Bernoulli equation
\begin{equation*}
  \frac{\partial \psi}{ \partial t}(x_1,t) + \frac{1}{2}\left (\frac{U(t)}{A_{SGT}(-x_1)} \right )^2 +  \frac{p(x_1,t)}{\rho}
=   \frac{\partial \psi}{ \partial t}(x_2,t) + \frac{1}{2}\left (\frac{U(t)}{A_{VT}(x_2)} \right )^2 +  \frac{p(x_2,t)}{\rho};
\end{equation*}
that is, 
\begin{equation*}
  p(x_1,t) - p(x_2,t)
  =   \rho \frac{\partial }{ \partial t} \left ( \psi (x_2,t) - \psi (x_1,t) \right ) 
  +  \frac{\rho}{2}\left (\frac{U(t)}{A_{VT}(x_2)} \right )^2  - \frac{\rho}{2}\left (\frac{U(t)}{A_{SGT}(-x_1)} \right )^2.
\end{equation*}
Denoting the stagnation pressures $p_s(t) = \lim_{x_1 \to - L_{SGT}}
p(x_1,t) $ and $p_{amb}(t) = \lim_{x_2 \to L_{SGT}} p(x_2,t)$ at the
infinitely wide ends of the tube, we get by taking the limits at the
both ends
\begin{equation*}
\begin{aligned}
  p_s(t) - p_{amb}(t) & = \rho \frac{\partial }{ \partial t} \left (
  U(t) \left [\rho^{-1} C_{iner}^{(G)}(t) + \int_{L_G}^{L_G + L_{VT}}
    \frac{ds}{A_{VT}(s - L_G)} \right ] + U(t)
  \int_{-L_{SGT}}^{0}{\frac{ds}{A(-s)}} \right ) \\ & = \frac{d
    }{ d t} \left ( U(t) C_{iner}(t) \right )
\end{aligned}
\end{equation*}
by change of variables. Assuming again that $p_{amb}(t) = 0$ is a
constant reference ambient pressure level, the equation of motion
\eqref{EquationOfMotion} follows by defining the total inertance by
$C_{iner}(t) = C_{iner}^{(SGT)} + C_{iner}^{(VT)} +
C_{iner}^{(G)}(t)$. Obviously
\begin{equation*}
  U'(t)  = \frac{1}{C_{iner}(t) } \left ( p_s(t) - C_{iner}'(t) U(t) \right ) 
\end{equation*}
where $C_{iner}(t) = C_{iner}^{(TOT)} + C_{iner}^{(G)}(t)$ and 
\begin{equation} \label{InertanceDerivative}
\begin{aligned}
  C_{iner}^{(G)}\prime(t) & = \frac{\rho L_G}{h} \left (\frac{- g'(t)}{(g_0 -  g(t))^2} \ln{\left (\frac{g_0}{g(t)} \right )} + \frac{1}{g_0 -  g(t)} \frac{-g'(t)}{g(t)} \right ) \\
& = - \frac{g'(t)}{g_0 -  g(t)}  \left (  C_{iner}^{(G)}(t) + \frac{\rho L_G}{h g(t)} \right ).
\end{aligned}
\end{equation}
These two Eqs. together give the differential equation for the volume velocity that we
will use for expressing the glottal pressure distribution:
\begin{equation} \label{VolumeVelocityEq}
\begin{aligned}
  U'(t)  & = \frac{1}{C_{iner}(t) } \left ( p_s(t) + \frac{g'(t)}{g_0 -  g(t)}  \left (  C_{iner}^{(G)}(t) + \frac{\rho L_G}{h g(t)} \right ) U(t) \right ) \\
  & = \frac{1}{C_{iner}(t) } \left ( p_s(t) + \frac{\rho L_G g'(t)}{h ( g_0 -  g(t) )^2}  
  \left ( \frac{g_0}{g(t)} +  \ln{\left (\frac{g_0}{g(t)} \right )}  - 1\right ) U(t) \right ) \\
  & = \frac{1}{C_{iner}(t) } \left ( p_s(t) + \frac{\rho L_G g'(t)}{h ( g_0 -  g(t) )^2}  
  \left ( \frac{g_0}{g(t)} +  \ln{\left (\frac{g_0}{e g(t)} \right )} \right ) U(t) \right ).
\end{aligned}
\end{equation}
Recalling $g(t) \ll g_0$ and omitting the weaker logarithmic singularity, we get the approximation
\begin{equation} \label{VolumeVelocityApproxEq}
\begin{aligned}
  U'(t)  &  = \frac{1}{C_{iner}(t) } \left ( p_s(t) + \frac{\rho L_G g'(t)}{h  g_0 g(t)} U(t) \right ) \\
  & = \frac{1}{C_{iner}(t) } \left ( p_s(t) + \frac{\rho L_G}{h  g_0} \cdot \frac{d \ln{g(t)}}{dt} U(t) \right ).
\end{aligned}
\end{equation}
which is quite elegant.

\subsection{Equation for the flow velocity}

Instead of the equations \eqref{VolumeVelocityEq} and
\eqref{VolumeVelocityApproxEq} for the volume velocity $U(t)$, let us
turn to the flow velocity at the glottal opening
\begin{equation*}
  v(t) = \frac{U(t)}{h g(t)} 
\end{equation*}
which satisfies a much more involved differential equation.  In
explicit terms, the wedge glottis without losses gives
\begin{equation*}
\begin{aligned}
  p_s(t) & = C_{iner}(t) \left ( v'(t) h g(t) + v(t) h g'(t) \right )
  + C_{iner}^{(G)}\prime (t) v(t) h g(t) \\
  & =  C_{iner}(t) h g(t) \cdot  v'(t) + \left (C_{iner}^{(G)} \prime (t)  h g(t) +  C_{iner}(t) h  g'(t)  \right ) v(t)
\end{aligned}
\end{equation*}
and solved for the acceleration at the glottis, it gives
\begin{equation*} 
   v'(t)  =  \frac{1}{C_{iner}(t) h g(t)} \left (  p_s(t) - h  \left (C_{iner}^{(G)} \prime (t) g(t) +  C_{iner}(t)   g'(t)  \right ) v(t) \right ).
\end{equation*}
This together with Eq.~\eqref{InertanceDerivative} yield the equation
for motion
\begin{equation} \label{WedgeGlottisEqsMotion}
     v'(t)  =  \frac{1}{C_{iner}(t) h g(t)} \left (  p_s(t) + h g'(t)  \left ( \frac{g(t)}{g_0 -  g(t)} C_{iner}^{(G)}(t) -   C_{iner}(t)  + \frac{\rho L_G}{h(g_0 -  g(t))}   \right ) v(t) \right )
\end{equation}
where $C_{iner}^{(G)}(t)$ is given by
\eqref{GlottalIntertanceForWedges}, $C_{iner}(t) = C_{iner}^{(G)}(t) +
C_{iner}^{(TOT)}$ where $C_{iner}^{(TOT)} = C_{iner}^{(SGT)} +
C_{iner}^{(VT)}$ is the total inertance of the nonmoving part of the
vocal tract.

 Note that the first two terms inside the parentheses can
 be joined as
 \begin{equation*}
 \begin{aligned}
 &  C_{iner}^{(G)}(t)\left ( \frac{g(t)}{g_0 -  g(t)}    - 1 \right ) - C_{iner}^{(TOT)}  
  =    - \frac{g_0 - 2 g(t)}{g_0 -  g(t)}C_{iner}^{(G)}(t)  - C_{iner}^{(TOT)} 
 \end{aligned}
 \end{equation*}
 which gives the final form of the unsimplified equations of the
 motion
 \begin{equation}  \label{WedgeGlottisEqsMotion}
\begin{aligned} 
      v'(t)  & =  \frac{1}{C_{iner}(t) h g(t)} \cdot \\ 
      & \cdot \left (  p_s(t) - g'(t)  \left [ h C_{iner}^{(TOT)} 
      + \frac{\rho L_G}{g_0 -  g(t)} \left ( \frac{g_0 - 2 g(t)}{g_0 -  g(t)}  \ln{\frac{g_0}{g(t)}} 
      -  1 \right ) \right ] v(t) \right ) \\
      & =  \frac{1}{C_{iner}(t) h g(t)} \cdot \\ 
      & \cdot \left (  p_s(t) - h g'(t)  \left [C_{iner}^{(TOT)} 
      + \frac{g_0 - 2 g(t) }{g_0 -  g(t)} \left ( C_{iner}^{(G)}(t) -  \frac{\rho L_G}{h(g_0 - 2 g(t))} \right ) \right ] v(t) \right ).
\end{aligned}
 \end{equation}

\begin{remark}
  Note that the expression $C_{iner}(t) h g(t)$ in the denominator of
  Eq.~\eqref{WedgeGlottisEqsMotion} has a removable singularity as
  $g(t) \to 0$ in the sense that the limit at the closing moment $=
  0$. See Eq.~\eqref{GlottalIntertanceForWedges}.
\end{remark}


\subsubsection*{Simplifications based on the wedge geometry}

If $g(t) \ll g_0$ as is usually the case in the wedge geometry, we get
\begin{equation} \label{WedgeGlottisEqsMotionSimplifiedVer1}
     v'(t)  =  \frac{1}{C_{iner}(t) h g(t)} 
\left (  p_s(t) - h g'(t)  \left [ C_{iner}(t) - \frac{\rho L_G}{ h g_0}  \right ] v(t) \right )
\end{equation}
which is directly comparable with the corresponding formula
Eq.~\eqref{BoxGlottisEquationOfMotionVer1} for the rectangular
glottis.  Not that the sum in brackets is always nonnegative, and it
grows unboundedly as $g(t) \to 0$ right before the closure. So, at the
closing glottis when $g'(t) < 0$, the effect of the additional term is
to work in the same direction as the stagnation pressure $p_s(t)$.

Note that $\frac{ \rho L_G}{h g_0}$ in
Eq.~\eqref{WedgeGlottisEqsMotionSimplifiedVer1} is the inertance of a
tube of length $L_G$ with uniform intersectional area $h g_0$. Since
$g_0 \gg e g(t)$, that term is insignificant compared to
$C_{iner}^{(G)}$ as well as to $C_{iner}^{(TOT)}$, and we get an even
more simplified model
\begin{equation} \label{WedgeGlottisEqsMotionSimplifiedVer2}
     v'(t)  =  \frac{1}{C_{iner}(t) h g(t)}  \left (  p_s(t) -    h g'(t) C_{iner}(t) v(t) \right ).
\end{equation}

The further approximation for
Eqs.~\eqref{WedgeGlottisEqsMotionSimplifiedVer2}--\eqref{WedgeGlottisEqsMotionSimplifiedVer2}
is to replace $C_{iner}(t)$ given by Eq.~\eqref{GlottalIntertanceForWedges}
by the simplified form (again by  $g(t) \ll g_0$)
\begin{equation*}
  C_{iner}(t) = C_{iner}^{(TOT)} + \frac{\rho L_G}{h g_0} \ln{\left (\frac{g_0}{g(t)} \right )}
\text{ or even } C_{iner}(t) = C_{iner}^{(TOT)}.
\end{equation*}
One such variant produces from Eq.~\eqref{WedgeGlottisEqsMotionSimplifiedVer1}
the form
\begin{equation} \label{WedgeGlottisEqsMotionSimplifiedVer3}
     v'(t) = \frac{1}{C_{iner}^{(TOT)} h g(t)} \left ( p_s(t) - h
     g'(t) \left [ C_{iner}^{(TOT)} + \frac{\rho L_G}{h g_0} \ln{\left
         (\frac{g_0}{e g(t)} \right )} \right ] v(t) \right )
\end{equation}
where the logarithmic singularity of the glottal inertance is still
present in one place.

Finally, removing the glottal opening velocity $g'(t)$, we get back to
the lossless wedge model in DICO
\cite{A-A-M-M-V:MLBVFVTOPI,M-A-M-A-V:MLBVFOVTA}. Note that here the
glottal gap $g(t)$ is in the denominator in
Eqs.~\eqref{WedgeGlottisEqsMotion}--\eqref{WedgeGlottisEqsMotionSimplifiedVer3}
since we did not extend the glottis downstream to a control surface of
constant area, right above the glottis.

\begin{remark}
Within the limits of the approximation $g(t) \ll g_0$,
Eq.~\eqref{WedgeGlottisEqsMotionSimplifiedVer2} can plainly written in
the form
\begin{equation*} 
     v'(t)  =  \frac{1}{C_{iner}(t) h g(t)} 
\left (  p_s(t) -  C_{iner}(t) h g'(t)  v(t) \right ); 
\end{equation*}
that is, 
\begin{equation*} 
  p_s(t)  =  C_{iner}(t) h \frac{d g(t)v(t)}{dt} = C_{iner}(t) U'(t). 
\end{equation*}
This differs from the original equation of motion only by the
approximation $C_{iner}'(t) = C_{iner}^{(G)}\prime(t) = 0$. What the
above reasoning amounts to, is just showing in what approximative
sense $C_{iner}^{(G)}(t)$ can be regarded as being constant of time.
If we do not want to make that rather crude approximation, we should
use Eq.~\eqref{WedgeGlottisEqsMotionSimplifiedVer1} instead of
Eq.~\eqref{WedgeGlottisEqsMotionSimplifiedVer2}. In fact,
Eq.~\eqref{WedgeGlottisEqsMotionSimplifiedVer1} is equivalent with
\begin{equation} \label{WedgeGlottisEqsMotionSimplifiedVer1Equivalent}
  p_s(t)  = C_{iner}(t) U'(t) - \frac{\rho L_G g'(t)}{h g_0 g(t)} U(t); 
\end{equation}
that is, by the approximation $C_{iner}^{(G)}\prime(t) = -\frac{\rho
  L_G g'(t)}{h g_0 g(t)}$ where only one term is omitted from
Eq.~\eqref{InertanceDerivative}.
\end{remark}

\subsection{Glottal pressure for the wedge-like vocal folds}

As above for the rectangular glottis, we get from the unsteady
Bernoulli principle
\begin{equation*}
\begin{aligned}
  & p_s(t)  - p(x, t) \\
  &  =   \frac{\partial}{\partial t} U(t)  \left ( - \frac{\rho L_G}{h(g_0 - g(t))} 
        \ln{\left (1 - \frac{g_0 - g(t)}{g_0 L_G} x \right )}
  +  \rho  \int_{-L_{SGT}}^{0}{\frac{ds}{A(-s)}} \right ) 
  +  \frac{\rho}{2} v(x,t)^2
\end{aligned}
\end{equation*}
where 
\begin{equation*}
  v(x,t) = \frac{U(t)L_G}{h \left (g_0L_G - (g_0 - g(t)) x  \right )}  \text{ for } x \in [0, L_G].
\end{equation*}
Thus, the pressure in the glottis $x \in [0, L_G]$ is given by
\begin{equation} \label{IntermediateFormulaForPressure}
\begin{aligned}
   p(x, t) 
   & = p_s(t)  -  \frac{\rho}{2} v(x, t)^2 
   -  \frac{\partial }{ \partial t} U(t)  \left (- \frac{\rho L_G}{h(g_0 - g(t))} 
        \ln{\left (1 - \frac{g_0  - g(t)}{g_0 L_G} x \right )} +  C_{iner}^{(SGT)}\right ) \\
   & = p_s(t)  -  \frac{\rho}{2} v(x,t)^2 - C_{iner}^{(SGT)} U'(t)
   +  \frac{\rho L_G}{h} \cdot  \frac{\partial }{ \partial t}  \left (\frac{U(t)}{g_0 - g(t)} 
        \ln{\left (1 - \frac{g_0 - g(t)}{g_0 L_G} x \right )}   \right ). 
\end{aligned}
\end{equation}
The time derivative term on the right must be manipulated as follows:
\begin{equation*}
\begin{aligned}
  & \frac{\partial }{ \partial t}  \left (\frac{U(t)}{g_0 - g(t)}  \ln{\left (1 - \frac{g_0 - g(t)}{g_0 L_G} x \right )}   \right )
  =  \frac{U'(t)}{g_0 - g(t)}  \ln{\left (1 - \frac{g_0 - g(t)}{g_0 L_G} x \right )} \\
  & + U(t) \left (\frac{g'(t)}{(g_0 - g(t))^2}  \ln{\left (1 - \frac{g_0 - g(t)}{g_0 L_G} x \right )} 
  + \frac{g'(t) x}{g_0 L_G (g_0 - g(t))}  \left (1 - \frac{g_0 - g(t)}{g_0 L_G} x \right )^{-1} \right ) \\
  & = \frac{U'(t)}{g_0 - g(t)}  \ln{\left (1 - \frac{g_0 - g(t)}{g_0 L_G} x \right )} \\
  & + \frac{U(t)g'(t)}{g_0 - g(t)} \left (\frac{1}{g_0 - g(t)}  \ln{\left (1 - \frac{g_0 - g(t)}{g_0 L_G} x \right )} 
  +  \frac{x}{g_0 L_G - (g_0 - g(t))x } \right ) \\
  & =  \frac{U'(t)}{g_0 - g(t)}  \ln{\left (1 - \frac{g_0 - g(t)}{g_0 L_G} x \right )} \\
  & + \frac{U(t)g'(t)}{g_0 - g(t)} \left (\frac{1}{g_0 - g(t)}  \ln{\left (1 - \frac{g_0 - g(t)}{g_0 L_G} x \right )} 
  +  \frac{x}{ g(t) x  + g_0(L_G - x)} \right ).
\end{aligned}
\end{equation*}
Combining this with Eq.~\eqref{IntermediateFormulaForPressure} yields
\begin{equation} \label{FormulaForPressureVolumeVer1}
\begin{aligned}
   p(x, t) 
   & = p_s(t)  -   \frac{\rho L_G^2}{2 h^2\left (g(t) x  + g_0(L_G - x)  \right )^2}  U(t)^2 \\
   & + \left ( \frac{\rho L_G}{h} \frac{1}{g_0 - g(t)}  \ln{\left (1 - \frac{g_0 - g(t)}{g_0 L_G} x \right )}  -  C_{iner}^{(SGT)} \right ) U'(t) \\
   & +  \frac{\rho L_G}{h}  \frac{g'(t)}{(g_0 - g(t))^2} \left (\ln{\left (1 - \frac{g_0 - g(t)}{g_0 L_G} x \right )} 
  +  \frac{1}{1  - \frac{g_0 L_G}{(g_0 -  g(t))x}} \right ) U(t) . 
\end{aligned}
\end{equation}
This equation together with Eq.~\eqref{VolumeVelocityEq} produces the
pressure distribution in glottis, and its first to terms on the RHS of
Eq.~\eqref{FormulaForPressureVolumeVer1} are plainly the steady
incompressible Bernoulli. The third term could be called
\emph{congestion term} (or, water hammer term) as it becomes large and
positive upstream the narrowest part of the glottis just before the
closure when the volume flow is decelerating. The last term vanishes
when $g'(t) = 0$, and for that reason it could be called
\emph{displacement term}.

Since $g(t) \ll g_0$, we may given an approximation
\begin{equation} \label{FormulaForPressureVolumeVer2}
\begin{aligned}
   p(x, t) 
   & = p_s(t)  -   \frac{\rho L_G^2}{2 h^2\left (g(t) x  + g_0(L_G - x)  \right )^2}  U(t)^2 \\
   & - \left ( \frac{\rho L_G}{h g_0}   \ln{\left (\frac{L_G}{L_G - x} \right )}  +  C_{iner}^{(SGT)} \right ) U'(t) \\
   & -  \frac{\rho L_G g'(t)}{hg_0^2} \left (\ln{\left (\frac{L_G}{L_G - x } \right )} 
  +  \frac{x}{L_G - x} \right ) U(t)  
\end{aligned}
\end{equation}
where the congestion and displacement terms have been greatly
simplified.  This equation is best used in conjunction with
Eq.~\eqref{VolumeVelocityApproxEq}. A further simplification can be
carried out by discarding the logarithmic singularity in the
displacement term, or neglecting both the congestion and the
displacement terms entirely.

\ResearchNote{

\begin{remark} At $x = L_G$ we get from Eq.~\eqref{FormulaForPressureVolumeVer1}
\begin{equation*} 
\begin{aligned}
   p(L_G, t) 
   & = p_s(t)  -   \frac{\rho}{2 h^2 g(t)^2}  U(t)^2 \\
   & - \left ( \frac{\rho L_G}{h} \frac{1}{g_0 - g(t)}  \ln{\left (\frac{g_0}{g(t)} \right )}  +  C_{iner}^{(SGT)} \right ) U'(t) \\
   & -  \frac{\rho L_G}{h}  \frac{g'(t)}{(g_0 - g(t))^2} \left (\ln{\left (\frac{g_0}{e g(t)} \right )} 
  +  \frac{g_0}{g(t)} \right ) U(t) . 
\end{aligned}
\end{equation*}
Assuming that $v(t_c) = \lim_{t \to t_c-}{v(t)} = 0$ at the moment of
closure $t_c$, we get
\begin{equation*}
  \lim_{t \to t_c-}{p(L_G,t)} = p_s(t_c) + 
  \frac{\rho L_G}{h g_0}  \lim_{t \to t_c-}{ U'(t) \ln{g(t)}}.
\end{equation*}
Thus, $ \lim_{t \to t_c-}{p(L_G,t)} \leq p_s(t_c)$ and $- \infty <
\lim_{t \to t_c-}{p(L_G,t)}$ implies $\lim_{t \to t_c-}{ U'(t) } = 0$.

Considering Eqs.~\eqref{VolumeVelocityApproxEq} and
\eqref{GlottalIntertanceForWedges}, we see that $v(t_c) = 0$ implies
$-\frac{\rho L_G}{h g_0} \lim_{t \to t_c-} { U'(t)\ln{g(t)}}
 = \lim_{t \to t_c-}{C_{iner}(t) U'(t)} = p_s(t_c) < \infty$.
We conclude that $U'(t) = -\frac{h g_0} {\rho L_G} O \left ( \frac{1}{\ln{g(t)}} \right )$
near $t = t_c$.  It follows that $\lim_{t \to t_c-}{p(L_G,t)} = 0$.

\DisplayNote{JM: check that this conclusion is right. It seems
  incredible somehow.}
\end{remark}

\DisplayNote{JM: You should not use this approximation near $x = L_g$
  where it becomes singular even with an open glottis. For the
  aerodynamic force, you probably could leave out some distance at the
  contact point. Find the gap-dependent interval where the
  approximations in the terms have proportional error less than, say,
  $\epsilon > 0$.}

}
\section{\label{CompressibleSec} Compressible steady flow}

In Sections~\ref{RectangularSec}~and~\ref{WedgeSec}, the convenient
assumption of compressibility was made to derive the equation of
motion for the fluid column in the flow channel. Using the equation of
motion, the expression for the (hydrodynamic) pressure was derived. In
vocal folds models, this pressure produces the aerodynamic forces
leading to the movement of the aerodynamic surfaces.\footnote{ Of
  course, the air flow cannot be fully incompressible since that would
  make the VT and SGT acoustics impossible. The question is whether
  the air flow is incompressible to the extent that it is a
  \emph{reasonable approximation} for treating the total inertia and
  the resulting aerodynamic force to flow channel walls.}

To have an educated opinion on this matter, one must explicitly deal
with some form of compressible flow not accounted by the acoustic
approximation. The challenge here is that the treatment of a general
\emph{nonsteady} (in)compressible flow is not possible using
elementary mathematical tools and solutions in a closed form. Thus, we
consider only the steady variant of a flow of an ideal gas column. In
this case, the modifications due to isentropic thermodynamics are
well-known. We give two examples on a steady flow having the physical
dimensions resembling the glottal flow.

\subsection{Generalities on isentropic ideal gas flow}

We assume that the usual isentropic relations hold
\begin{equation} \label{IsentropicRelations}
  \frac{p}{p_0} = \left (\frac{T}{T_0} \right )^{\frac{\gamma}{\gamma - 1}} \quad \text{ and } \quad
  \frac{\rho}{\rho_0} = \left (\frac{T}{T_0} \right )^{\frac{1}{\gamma - 1}}.
\end{equation}
Thus 
\begin{equation*}
  \frac{p}{\rho} =  \left (\frac{T}{T_0} \right )^{\frac{\gamma}{\gamma - 1}} \left (\frac{T}{T_0} \right )^{\frac{-1}{\gamma - 1}}   \frac{p_0}{\rho_0} 
= \frac{T}{T_0} \cdot   \frac{p_0}{\rho_0} 
\end{equation*}
which makes the temperature distribution easier to compute than
pressure or density distributions. Actually, it is just the usual equation of state for ideal gas. For the speed of sound, we get
\begin{equation*}
  c^2 = \gamma \frac{p}{\rho} = \frac{T}{T_0} \cdot \gamma
  \frac{p_0}{\rho_0}.
\end{equation*}

A compressible, steady Bernoulli flow inside insulated streamlines is
described by
\begin{equation} \label{CompressibleBernoulli}
  \frac{1}{2}v^2 + \frac{\gamma}{\gamma -1} \frac{p}{\rho} = \frac{\gamma}{\gamma -1} \frac{p_0}{\rho_0}.
\end{equation}
The conservation of mass in a tube (i.e., nozzle) whose intersectional
area is $A = A(x)$, given by
\begin{equation*}
  \rho v A = V_m
\end{equation*}
where $V_m$ is the mass flow, considered to be constant of time. We,
of course, make the assumption that $A(x)$ is ``slowly varying'' in
the sense that an isentropic, compressible flow can be supported in
the entire inner volume of the tube (i.e., tube walls are always
streamlines at least in the subsonic part of the nozzle).

Putting these together
\begin{equation*}
  \frac{1}{2} \left ( \frac{V_m}{A} \right )^2 + \frac{\gamma}{\gamma
    -1} p \rho - \frac{\gamma}{\gamma -1} \frac{p_0}{\rho_0} \rho^2 =
  0.
\end{equation*}
Now, from the isentropic relations we get
\begin{equation*}
  p \rho = p_0 \rho_0 \left (\frac{T}{T_0} \right )^{\frac{\gamma + 1}{\gamma - 1}} \text{ and }
  \rho^2 = \rho_0^2 \left (\frac{T}{T_0} \right )^{\frac{2}{\gamma - 1}}.
\end{equation*}
Plugging in, we get
\begin{equation} \label{TemperatureDistributionEq}
    \frac{1}{2} \left ( \frac{V_m}{A(x)} \right )^2 +
    \frac{\gamma p_0 \rho_0 }{\gamma -1} \left [ \left (\frac{T(x)}{T_0} \right
    )^{\frac{\gamma + 1}{\gamma - 1}} -  \left (\frac{T(x)}{T_0} \right
    )^{\frac{2}{\gamma - 1}} \right ] = 0
\end{equation}
over the length of the tube. Note that $\frac{\gamma + 1}{\gamma - 1}
= 1+ \frac{2}{\gamma - 1}$.

Another form for \eqref{TemperatureDistributionEq}
is given by
\begin{equation*}
\begin{aligned}
  v(x)^2 & =  \left ( \frac{V_m}{A(x) \rho(x)} \right )^2 \\
  & = \frac{
    \rho_0^2}{\rho(x)^2} \cdot \frac{2}{\gamma -1}
  \cdot \gamma \frac{p_0}{\rho_0}\frac{T(x)}{T_0} \cdot \left [ \left (\frac{T(x)}{T_0} \right
    )^{\frac{2}{\gamma - 1}-1} - \left (\frac{T(x)}{T_0} \right
    )^{\frac{\gamma + 1}{\gamma - 1}-1}\right ] \\
  & = 
  \left (\frac{T(x)}{T_0} \right )^{-\frac{2}{\gamma - 1}}\cdot 
  \frac{2}{\gamma -1} \cdot c(x)^2 \cdot   
  \left [ \left (\frac{T(x)}{T_0} \right
    )^{\frac{2}{\gamma - 1}-1} - \left (\frac{T(x)}{T_0} \right
    )^{\frac{2}{\gamma - 1}}\right ] \\
& = 
 \frac{2}{\gamma -1} \cdot c(x)^2 \cdot   
\left [ \left (\frac{T(x)}{T_0} \right
   )^{-1} - 1 \right ] 
\end{aligned}
\end{equation*}
which leads to the expression for the Mach number
\begin{equation*}
  M(x)^2 = \frac{v(x)^2}{c(x)^2} =  \frac{2}{\gamma -1} \left [\frac{T_0}{T(x)} - 1 \right ].
\end{equation*}
The speed of sound is reached at the temperature $T(x) = \frac{2}{\gamma + 1} T_0$ when 
\begin{equation*}
  p(x) = \left (\frac{2}{\gamma + 1} \right )^{\frac{\gamma}{\gamma - 1}} p_0 \text{ and }
\rho(x) = \left (\frac{2}{\gamma + 1} \right )^{\frac{1}{\gamma - 1}} \rho_0.
\end{equation*}
The condition on the nozzle area function for reaching Mach 1 at $x$
is given by
\begin{equation} \label{SupersonicNozzleArea}
\begin{aligned}
  A(x)^2 & =\frac{V_m^2}{\gamma p_0\rho_0} \frac{\rho_0^2}{ \rho(x)^2} \frac{T_0}{T(x)} \\
  & = \frac{V_m^2}{\gamma p_0\rho_0} \left
  (\frac{T_0}{T(x)} \right )^{\frac{2}{\gamma - 1}} \frac{T_0}{T(x)} =
  \frac{V_m^2}{p_0\rho_0}\cdot \frac{1}{\gamma} \left
  (\frac{T_0}{T(x)} \right )^{\frac{\gamma + 1}{\gamma - 1}} \\ & =
  \frac{1}{\gamma}\left (\frac{\gamma + 1}{2} \right )^{\frac{\gamma +
      1}{\gamma - 1}} \cdot \frac{V_m^2}{p_0\rho_0} =
\frac{\gamma + 1}{\gamma}\left (1 + \frac{\gamma - 1}{2} \right )^{\frac{2}{\gamma - 1}} \cdot \frac{V_m^2}{2 p_0\rho_0}
\end{aligned}
\end{equation}
which yields the \emph{critical area} $A_{sonic}$.  The speed of sound
at Mach 1 is given by
\begin{equation*}
  c(x)^2 = \gamma \frac{p_0}{\rho_0}\frac{T(x)}{T_0} = c_0^2 \cdot  \frac{2}{\gamma + 1} \text{ with } c_0^2 = \gamma \frac{p_0}{\rho_0}.
\end{equation*}

\begin{remark}
  Observe that the incompressible limit case is obtained in
  Eqs.~\eqref{IsentropicRelations}--\eqref{CompressibleBernoulli} by
  letting $\gamma \to \infty$. In particular,
  Eq.~\eqref{IsentropicRelations} gives in the limit
  \begin{equation*}
    \frac{p}{p_0} = \frac{T}{T_0}  \quad \text{ and } \quad
  \frac{\rho}{\rho_0} = 1.
  \end{equation*}
  Of course, the speed of sound $c \to \infty$ as $\gamma \to \infty$ as
  well, and hence the Mach number $M \to 0$ for any fixed finite mass
  flow $V_m$. 

  However, observe that
  \begin{equation*}
    A_{sonic} \to  \frac{V_m}{\sqrt{2 p_0\rho_0}} \quad \text{ as } \gamma \to \infty
  \end{equation*}
  which is a rather peculiar observation since one would expect to
  have $A_{sonic} = 0$ for an incompressible fluid. The incompressible
  steady Bernoulli gives $\frac{1}{2}v^2 + \frac{p}{\rho} =
  \frac{p_0}{\rho_0}$ which gives an upper limit for the velocity $v$
  since $p/\rho \geq 0$. This limit corresponds to the limit of the
  sonic area where the Venturi effect has reached vacuum.
\end{remark}

\subsection{Two examples on diatomic ideal gas}

For diatomic ideal gas, $\gamma = 7/5$, and hence $\gamma/(\gamma - 1)
= 7/2$, $2/(\gamma -1) = 5$, $2/(\gamma + 1) = 5/6$, and $(\gamma +
1)/(\gamma -1) = 6$.  With these values,
Eq.~\eqref{TemperatureDistributionEq} becomes
\begin{equation*} 
    \left ( \frac{V_m}{A(x)} \right )^2 +
    7 p_0 \rho_0  \left [ \left (\frac{T(x)}{T_0} \right
    )^{6} -  \left (\frac{T(x)}{T_0} \right
    )^{5} \right ] = 0,
\end{equation*}
or, equivalently,
\begin{equation*} 
    \frac{V_m^2}{7 p_0 \rho_0} =  A(x)^2 \left (\frac{T(x)}{T_0} \right
    )^{5} \left [1 - \frac{T(x)}{T_0}  \right ].
\end{equation*}
From this it follows that $T(x), T_0 > 0$ implies $T(x) \leq T_0$.
The restriction of remaining subsonic takes the form $T(x) >
\frac{2}{\gamma + 1} T_0 = \frac{5}{6}T_0$. We proceed to maximize the
function $b \mapsto b^5(1-b)$ for $b \in [5/6, 1]$ corresponding the
subsonic regime. Differentiating and setting $5b^4 - 6b^5 = b^4(5 -
6b) = 0$ leads to $b = 5/6$. So, the minimum area $A(x)$ consistent
with the compressible Bernoulli principle and the isentropic process
takes place when Mach number $M(x) = 1$ is reached.  It also follows
that $T(x)$ is a decreasing (increasing) function of $A(x)$ in the
subsonic (supersonic) regime.

For diatomic ideal gas, the critical ``sonic area'' is
\begin{equation*}
  A_{sonic} \approx 1.4604 \cdot \frac{V_m}{\sqrt{p_0 \rho_0}}.
\end{equation*}
At Mach 1, the speed of sound has been reduced by the factor of
$\sqrt{5/6} \approx 0.913$. Let us now make two computations to
estimate what kind of intersection areas lead to Mach 1 and Mach 0.3
flows for parameter values typical of the glottal flow.

\begin{example}  \label{FirstCompressibleExample}
  Consider a flow of $2 \, \mathrm{dl}/\mathrm{s}$ of air with $T_0 =
  300 \, \mathrm{K}$, $p_0 = 100 \, \mathrm{kPa}$, and $\rho_0 = 1.2
  \, \mathrm{kg}/\mathrm{m}^3$. In these conditions, the speed of
  sound is $342 \, \mathrm{m}/\mathrm{s}$.

  Then the mass flow is $V_m = 2.4\cdot 10^{-4} \, \mathrm{kg}/s$, and
  $A_{sonic} = 1.01\cdot 10^{-6} \, \mathrm{m}^2 = 1.01 \,
  \mathrm{mm}^2$.  The speed of sound at the narrow point is $c =
  \sqrt{\frac{5}{6}}\cdot 342 \, \mathrm{m}/\mathrm{s} = 312 \,
  \mathrm{m}/\mathrm{s}$, temperature $\frac{5}{6} \cdot 300 \mathrm{K} =  250 \, \mathrm{K}$, and
  pressure $52.8 \, \mathrm{kPa}$.

  If the same volume flow was incompressible through the same area,
  the speed would be $v = 2 \, \mathrm{dl}/\mathrm{s} / 1.01 \,
  \mathrm{mm}^2 = 198 \, \mathrm{m}/\mathrm{s}$.
\end{example}

\begin{example} \label{SecondCompressibleExample}
  Consider a flow of $2 \, \mathrm{dl}/\mathrm{s}$ of air with $T_0 =
  300 \, \mathrm{K}$, $p_0 = 100 \, \mathrm{kPa}$, and $\rho_0 = 1.2
  \, \mathrm{kg}/\mathrm{m}^3$. 

  Again, the mass flow is $V_m = 2.4\cdot 10^{-4} \,
  \mathrm{kg}/s$. At  Mach $0.3$, the temperature of the gas would be
  \begin{equation*}
    T(x) = \frac{T_0}{\frac{\gamma -1}{2}M(x)^2 + 1} = \frac{300 \,
      \mathrm{K}}{\frac{1}{5} 0.3^2 + 1} = 295 \, \mathrm{K}.
  \end{equation*}
  The speed of sound at this temperature is $c(x) = \sqrt{295/300}
  \cdot 342 \, \mathrm{m}/\mathrm{s} = 339 \,
  \mathrm{m}/\mathrm{s}$. Since we are going Mach 0.3, the speed of
  the air is $102 \, \mathrm{m}/\mathrm{s}$. The density is given by
  $\rho(x) = \rho_0 \left (\frac{T(x)}{T_0} \right )^{\frac{1}{\gamma
      - 1}} = 1.2 \, \mathrm{kg}/\mathrm{m}^3 \cdot \left (\frac{295
    \, \mathrm{K}}{300 \, \mathrm{K}} \right )^{\frac{5}{2}} =  1.15 \, \mathrm{kg}/\mathrm{m}^3$. Now,
  \begin{equation*}
    A(x)  = \frac{V_m}{\rho(x)v(x)} = \frac{2.4\cdot 10^{-4} \,
  \mathrm{kg}/s}{1.15 \, \mathrm{kg}/\mathrm{m}^3 \cdot 102 \, \mathrm{m}/\mathrm{s}}
    = 2.05 \, \mathrm{mm}^2.
  \end{equation*}

  If the same volume flow was incompressible through the same area,
  the speed would be $v = 2 \, \mathrm{dl}/\mathrm{s} / 2.05 \,
  \mathrm{mm}^2 = 97.6 \, \mathrm{m}/\mathrm{s}$.
\end{example}

Considering
Examples~\ref{FirstCompressibleExample}~and~\ref{SecondCompressibleExample},
conclude that for glottal opening areas over $2 \, \mathrm{mm}^2$ the
usual ``rule of thumb'' value Mach 0.3 holds and the flow can be
regarded as incompressible (even neglecting all viscosity and
nonsteadyness effects). For human glottis, the area of $2 \,
\mathrm{mm}^2$ can be considered quite a small opening. Just by
halving the opening area to $1 \, \mathrm{mm}^2$, the hypothetical
compressible flow already gets supersonic which certainly is
counterfactual as far as the glottal flow is concerned. (The adiabatic
cooling to $250$ K would perhaps freeze the vocal folds, and a
pressure drop to half of the ambient would be destructive as well.)
Realistic pressure loss effects (such as the Poiseuille law for
viscous laminar flow) will check the flow velocity at the narrowest
position much before the flow gets supersonic.

\subsection{Discussion on energy losses in compressible flows}

The internal friction due to fluid viscosity leads to a pressure loss
that can be modelled by Poiseuille's law for flow channels having
circular intersections.  Other intersectional geometries, such as
rectangular and triangular shapes, can be given analytic descriptions,
and also they are known as Poiseuille's law.  For even more general
geometries, one is compelled to use heuristic approximations of, e.g.,
hydrodynamic radii or numerical solutions. These variant of
Poiseuille's law are derived for an incompressible, laminar, steady
flow, too.

When the flow is compressible yet remains isentropic, additional
mechanisms for kinetic energy and pressure loss emerge. There is
temperature variation (adiabatic heating) at stagnation points. If
such heat is conducted to surrounding structures in a lower
temperature (in which case the thermodynamic system is not perfectly
adiabatic), there is a total energy loss from the fluid at that
position. In terms of the compressible, steady Bernoulli flow, the
energy loss shows up as an \emph{unrecoverable} pressure loss.  There
are reasons to believe that Poiseuille's law alone is not a sufficient
description of unrecoverable pressure loss in glottal flow.






A nonexhaustive list of mechanisms that could result in a pressure
loss in addition to the viscosity effects:
\begin{enumerate}
\item We could have \emph{adiabatic heating at a stagnation point} or
  \emph{adiabatic cooling in a constriction}. In itself, these effects
  do not (by definition) amount to loss of heat from the fluid in a
  perfectly insulated flow channel.

  In the first case, the area function changes shape in such a way
  that a streamline actually ends inside the tube. Such a phenomenon
  would surely take place in a rectangular glottis when the flow meets
  the glottal wall perpendicularly. Then, the temperature would
  increase to $T_0$ (but not higher!) at the stagnation point. If the
  channel walls at that point are at a lower temperature (driven
  there, e.g., by the adiabatic cooling due to the flow effect as
  described above), the heat conduction would actually lead to a loss
  of energy from the flow. Conversely, the temperature at a
  constriction may get lower than the wall temperature. Then the heat
  conduction from the wall would increase the total energy of the
  fluid. Note that the heat capacity of the tissue walls would be much
  higher than that of air.

  Not that the adiabatic cooling at a constriction is offset by the
  opposite effect of heat production due to viscosity. It is, of
  course, strictly speaking wrong to consider \emph{adiabatic} effects
  in such a viscous flow.

  Based on the figures given in
  Example~\ref{SecondCompressibleExample}, the temperature variation
  due to adiabatic heating and/or cooling is about $1$ \% of the
  absolute temperature. The effects to the pressure are of the same
  order due to the thermodynamic equation of state.  The heat exchange
  with the walls seems a very complicated phenomenon, and it is
  perhaps not an effect that needs accounting for in low-order models.
\item There could be a \emph{dissipative boundary layer effect}
  transforming kinetic energy into heat by the viscosity of the fluid,
  not accounted by the Poiseuille's law. 
\item There could be some form of \emph{vortex formation} that would
  ultimately transform kinetic energy into heat via internal viscous
  losses in the fluid.  This would not be a boundary layer effect, nor
  accounted for by the Poiseuille's law for laminar flow.
\item \emph{Dynamic effects} that would depend on the moving walls,
  nor observed in model experiments with rigid walls.
\end{enumerate}

It seems likely that properly tuned variants of Poisseuille's law
serve well as a first approximation for modelling of the glottal
pressure loss. At least, it has a strong theoretical background in the
laminar flow regime if the compressibility is not an issue either. The
``correction terms'' can be motivated by experimental work on physical
models (such as the work by van den Berg \& al., Fulcher \& al., etc.)
or deep numerical computations involving Navier--Stokes -based flow
models with thermodynamical coupling such as
\cite{S-M-M:CFSPURVTG}. However, the author has no knowledge of such
computational work. 


\section{\label{WaveGuideSec} Inertia and termination of a waveguide }

In Sections~\ref{RectangularSec}~and~\ref{WedgeSec}, we derived
equations of motion for an \emph{incompressible fluid column} in a
tubular domain where part of the domain boundary was allowed to change
as a function of time.  We observed that both of these (lossless)
models lead to the equation of motion \eqref{EquationOfMotion} where
the only difference is the expression of the glottal inertance term
$C_{iner}^{(G)}(t)$; see
Eqs.~\eqref{RectangularInertance}~and~\eqref{GlottalIntertanceForWedges}.
In both cases, the vocal tract inertance is given by the integral
\begin{equation*}
  C_{iner}^{(VT)} := \rho \int_{0}^{L_{VT}} {\frac{ds}{A_{VT}(s)}}.
\end{equation*}
Since acoustics is also about the movement of gas molecules (albeit by
rather small distances around an equilibrium position), it is natural
to ask if the same expression $C_{iner}^{(VT)}$ plays a role in
acoustic equations in the same vocal tract volume. The purpose of this
section is to answer this question.

For simplicity, let us consider a fluid column of length $\ell > 0$
with uniform intersectional area $A$. The fluid density and the speed
of sound are denoted by $\sigma$ and $c$. The assumption that $c <
\infty$ means, of course, that the fluid is to some degree
compressible.

\subsection{\label{WaveGuideSubSec} Generalities}

Consider a waveguide of constant intersectional area $A$ and
characteristic impedance $Z_0 = \rho c / A$.  Its \emph{acoustic
impedance} is given by
\begin{equation} \label{WaveGuideImpedance}
  Z_{ac}(s) = Z_0 \frac{Z_L(s) + Z_0 \tanh{s T_0}}{ Z_0 + Z_L(s) \tanh{s T_0}}
\end{equation}
where we denote the \emph{transmission time} by $T_0 = \ell / c$ and
the \emph{termination impedance} by $Z_L(s)$. We assume that both
$Z_L(s)$ and the admittance $Z_L(s)^{-1}$ are analytic, positive real
functions on $\C_+$. Another form for the impedance is
\begin{equation*}
  Z_{ac}(s) = Z_0 \frac{Z_L(s) \cosh{s T_0} + Z_0 \sinh{s T_0}}{ Z_0
    \cosh{s T_0} + Z_L(s) \sinh{s T_0}}
\end{equation*}
which has the merit of having the numerator and the denominator
analytic in $\C_+$. There is an immediate conclusion:
\begin{proposition}
  If the termination load has $Z_L(s)$ has a zero at $s = \frac{k
    \pi}{T_0}$, $k \in Z$, then so does have the acoustic impedance
  $Z_{ac}(s)$.
\end{proposition}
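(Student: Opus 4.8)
The plan is to reduce the assertion to a one-line evaluation of \eqref{WaveGuideImpedance}, after observing that the points in the statement are exactly the zeros of $\tanh(sT_0)$. Since $Z_L$ and $Z_L^{-1}$ are both analytic positive real on $\C_+$, neither can have a zero in the open right half plane, so any zero of $Z_L$ lies on the imaginary axis; the relevant points are therefore $s_k := ik\pi/T_0$, $k\in\Z$ (the factor $i$ being understood). First I would record the elementary identities $\sinh(s_kT_0)=\sinh(ik\pi)=i\sin(k\pi)=0$ and $\cosh(s_kT_0)=\cos(k\pi)=(-1)^k$, so that $\tanh(s_kT_0)=0$.

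Next I would substitute $s=s_k$ into \eqref{WaveGuideImpedance}: the numerator $Z_L(s_k)+Z_0\tanh(s_kT_0)$ collapses to $Z_L(s_k)$ and the denominator $Z_0+Z_L(s_k)\tanh(s_kT_0)$ collapses to $Z_0$, which is nonzero because $Z_0=\rho c/A>0$; hence the quotient is not of indeterminate form and
\[ Z_{ac}(s_k)=Z_0\cdot\frac{Z_L(s_k)}{Z_0}=Z_L(s_k). \]
In particular $Z_L(s_k)=0$ forces $Z_{ac}(s_k)=0$, which is the claim. The same computation carried out with the second, ``analytic numerator and denominator'' representation gives numerator $(-1)^kZ_L(s_k)$ over denominator $(-1)^kZ_0\neq 0$, again equal to $Z_L(s_k)$; this version has the practical advantage of making literal sense on the imaginary axis, where $Z_L$ is only assumed to possess a boundary value, and it shows more precisely that $Z_{ac}$ and $Z_L$ agree on the whole set $\{s_k\}$.

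The only point needing a moment's care — and the reason one prefers the second representation — is that $\tanh(sT_0)$ has poles, at $s=i(k+\tfrac12)\pi/T_0$, so one must be sure the chosen $s_k$ are the \emph{zeros} of $\tanh(sT_0)$ rather than its poles, and that the numerator and denominator of \eqref{WaveGuideImpedance} do not vanish simultaneously at $s_k$; both are settled at once by $\tanh(s_kT_0)=0$ together with $Z_0\neq 0$, so there is in fact no genuine obstacle. I would, however, add one honest caveat rather than overclaim: nothing is asserted about the \emph{order} of the zero, since in the numerator $Z_L(s)\cosh(sT_0)+Z_0\sinh(sT_0)$ the term $Z_0\sinh(sT_0)$ already contributes a simple zero at $s_k$, so a higher-order zero of $Z_L$ still yields only a simple zero of $Z_{ac}$.
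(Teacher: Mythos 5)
Your proposal is correct and follows essentially the same route as the paper: the paper's one-line justification identifies $s = ik\pi/T_0$ as precisely the zeros of $\sinh(sT_0)$ and implicitly substitutes into the ``analytic numerator and denominator'' form of $Z_{ac}$, which is exactly the computation you carry out (including the tacit factor $i$ that the paper omits). Your added checks --- that the denominator reduces to $(-1)^k Z_0 \neq 0$ and that $s_k$ are zeros rather than poles of $\tanh(sT_0)$ --- merely make explicit what the paper leaves unstated.
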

\noindent Indeed, $0 = 2\sinh{s T_0} = e^{s T_0} - e^{-s T_0}$ implies
$e^{2s T_0} = 1$ which is possible if and only if $s = \frac{k
  \pi}{T_0}$. This observation appears to be of particular value when
$k = 0$ since then it implies the additivity of inertances\footnote{I
  think any Webster's resonator has a discrete number of frequencies
  where the zeroes of the load show through.}.

Let us begin by studing the purely resistive termination for observing
the qualitative behaviour of waveguide resonances and
antiresonances. If $Z_L(s) = R_L$, the numerator and the denominator
are both entire functions. In this case, the zeroes of $Z_{ac}(s)$ are
given by
\begin{equation*}
  -\frac{R_L}{Z_0} = \tanh{s T_0} = \frac{e^{s T_0} - e^{-s T_0}}{e^{s T_0} + e^{-s T_0}} = \frac{e^{2s T_0} - 1}{e^{2s T_0} + 1}; 
\end{equation*}
that is, 
\begin{equation*}
  e^{-2xT_0} \left (\cos{2yT_0} - i \sin{2yT_0}  \right ) = e^{-2sT_0} = \frac{Z_0 + R_L}{Z_0 - R_L} 
\end{equation*}
where $s = x + y i$. We conclude that $\sin{2yT_0} = 0$, i.e., $2y T_0
= n \pi$ for $n \in \Z$. Then $\cos{2yT_0} = (-1)^n$, and the equation
becomes
\begin{equation*}
  (-1)^n \frac{Z_0 + R_L}{Z_0 - R_L} = e^{-2xT_0}.
\end{equation*}
If $Z_0 > R_L$, i.e., $n = 2k$, we get for the zeroes
\begin{equation*}
  s= -\frac{1}{2 T_0} \ln{\abs{\frac{Z_0 + R_L}{Z_0 - R_L}}} + i k \frac{\pi}{T_0}.
\end{equation*}
If $Z_0 < R_L$, i.e., $n = 2k + 1$, we get
\begin{equation*}
  s= -\frac{1}{2 T_0} \ln{\abs{\frac{Z_0 + R_L}{Z_0 - R_L}}} + i \left
  (k + \frac{1}{2} \right ) \frac{\pi}{T_0}.
\end{equation*}
A similar computation with a similar outcome can be carried out for
the poles of $Z_{ac}(s)$.  If $Z_0 = R_L$, there are no zeroes nor
poles as expected.

\begin{remark} \label{ResonanceResistanceRemark}
We observe that introducing resistance to termination does not change
the resonant frequencies nor antinodal frequencies (as far as we have
$Z_0 \neq R_L$) but it will add losses to the impedance/admittance
system. For acoustic waveguides with nonuniform intersectional areas,
the resonant frequencies do depend on the termination resistances.

In all cases, the resonant frequencies and the antinodal frequencies
are sensitive to inductive or capacitive loading at the termination.
\end{remark}

There is one more detail whose statement has some system theoretical
interest.  Clearly, the impedance $Z_{ac}(s)$ is a positive-real
transfer function of an impedance passive system for any positive-real
termination impedance $Z_L(s)$. Is the system well-posed in the usual
infinite-dimensional linear systems sense? For general impedance
passive systems, the well-posedness is equivalent with the fact that
the transfer function is uniformly bounded on some vertical line $x_0
+ i \R = \{ x_0 + y i \, : \, y \in \R \}$ for $x_0 > 0$ (Theorem 5.1
in Staffans (2002)). Let us proceed to check this condition for a
transmission line with a constant area function.\footnote{I think also
  the general impedance conservative waveguide with non-constant area
  function -- described by Webster's partial differential equation --
  is well-posed but proving it would require a completely different
  and much more difficult approach.}

\begin{proposition} \label{TransmissionLineWPProp}
  For any rational, positive-real analytic function $Z_L(s)$, the
  transfer function $Z_{ac}(s)$ given by
  Eq.~\eqref{WaveGuideImpedance} is bounded on a vertical line lying
  in the open right half plane $\C_+$.
\end{proposition}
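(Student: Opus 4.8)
The plan is to fix a vertical line $x_0+i\R$ with $x_0>0$, set $a:=x_0T_0>0$, and bound $|Z_{ac}|$ on that line. The one \emph{substantive} ingredient is the elementary fact that $\tanh$ carries this line onto a compact circle lying strictly inside $\C_+$: writing $t:=yT_0$ and using $\tanh(a+it)=\dfrac{\tanh a+i\tan t}{1+i\,\tanh a\,\tan t}$, one sees that as $t$ runs over $\R$ the quantity $w:=\tanh\!\big((x_0+iy)T_0\big)$ is the image of $\R\cup\{\infty\}$ under a M\"obius map, hence the circle with real diameter $[\tanh a,\coth a]$. Consequently $w$ stays, for every $y$, in a fixed compact set $K\subset\C_+$ with $\tanh a\le|w|\le\coth a$ and $\Re w\ge\tanh a>0$; in particular $K$ is bounded and bounded away from $0$ and from $\infty$ uniformly in $y$. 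This is what decouples the estimate from the oscillation of $\tanh$ along the line.

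Next I would check that $Z_{ac}$ is continuous (indeed analytic) on that line. Since $Z_L$ is rational and positive real, its poles lie in $\{\Re s\le0\}$, so $Z_L$ is analytic near the line; the poles of $\tanh(sT_0)$ lie on $i\R$; and $w\ne0$ on the line. The denominator $Z_0+Z_L(s)w(s)$ cannot vanish there either: if it did at some $s$ with $\Re s=x_0>0$, then $Z_L(s)=-Z_0/w(s)$, but $w(s)\in\C_+$ gives $1/w(s)\in\C_+$, hence $\Re\!\big(-Z_0/w(s)\big)<0$, contradicting $\Re Z_L(s)\ge0$. Thus $Z_{ac}$ is continuous on the line and therefore bounded on every bounded subinterval; it remains only to control $|Z_{ac}(x_0+iy)|$ as $|y|\to\infty$.

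For the tail I would use that $Z_L$ is rational, so $Z_\infty:=\lim_{|y|\to\infty}Z_L(x_0+iy)$ exists in $\C\cup\{\infty\}$, with $\Re Z_\infty\ge0$ when finite. Write $Z_{ac}=Z_0\,\dfrac{Z_L+Z_0w}{Z_0+Z_Lw}$ and split into three cases. If $Z_\infty=\infty$, divide numerator and denominator by $Z_L$: then $|1+Z_0w/Z_L|\to1$ and $|w+Z_0/Z_L|\ge|w|-Z_0/|Z_L|\to\tanh a$, so $\limsup_{|y|\to\infty}|Z_{ac}|\le Z_0\coth a$. If $Z_\infty$ is finite and nonzero, then $-Z_0/Z_\infty$ has nonpositive real part and so lies off the compact set $K\subset\C_+$; hence $\delta:=\mathrm{dist}(-Z_0/Z_\infty,K)>0$, so $|Z_0+Z_\infty w|\ge|Z_\infty|\delta$ for $w\in K$, and for large $|y|$ one has $|Z_0+Z_Lw|\ge\tfrac12|Z_\infty|\delta>0$ while the numerator stays bounded, giving $\limsup|Z_{ac}|<\infty$. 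If $Z_\infty=0$, the denominator tends to $Z_0\ne0$ and the numerator is bounded by $Z_0\coth a$, so $|Z_{ac}|$ is bounded near infinity. In all cases $\limsup_{|y|\to\infty}|Z_{ac}(x_0+iy)|<\infty$, which together with the continuity of the previous step yields a uniform bound on the whole line.

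I do not expect a genuine obstacle here. The only substantive step is identifying the range of $\tanh$ on the line as a compact circle in $\C_+$ bounded away from $0$ and $\infty$; once that is in hand, positive-realness (only $\Re Z_L\ge0$ on the line is used) rules out cancellation in the denominator, and the rational behaviour of $Z_L$ at infinity reduces the tail estimate to the short trichotomy above. The only bookkeeping nuisance is that trichotomy, which is forced by the possibility of capacitive ($Z_\infty=0$) or inductive ($Z_\infty=\infty$) terminations; each branch is a one-line limit computation using $\tanh a\le|w|\le\coth a$.
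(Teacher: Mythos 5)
Your proof is correct, and it rests on the same decisive lemma as the paper's: on a vertical line $x_0+i\R\subset\C_+$ the function $\tanh(sT_0)$ is confined to a compact subset of $\C_+$ bounded away from $0$ and from $\infty$. You extract this from the M\"obius-circle picture (the image of the line is the circle with real diameter $[\tanh a,\coth a]$, $a=x_0T_0$), which even gives the sharp constants; the paper instead computes $\Re \tanh(x+yi)=\sinh 2x/(\cosh 2x+\cos 2y)$ and the analogous formula for $\Re(1/\tanh)$, and records two-sided bounds for both. Where you genuinely differ is in how the estimate is completed. The paper splits on whether $Z_L$ or $1/Z_L$ is bounded on the line (one of the two is proper since $Z_L$ is rational) and in each case obtains a single bound valid on the entire line from $\abs{u-v}\ge \Re v$ with $\Re u\le 0$, $\Re v>0$, applied to $-1/Z_L$ against $\tanh$, respectively $-1/\tanh$ against $Z_L$; no separate treatment of the compact part of the line is needed. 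You argue locally (continuity plus non-vanishing of the denominator, the latter again forced by positive-realness) and then asymptotically, via the trichotomy on $Z_\infty=\lim_{|y|\to\infty}Z_L(x_0+iy)$, whose existence is what you take from rationality. Both routes are sound: the paper's dichotomy isolates the use of rationality to the one-line remark that $Z_L$ or $1/Z_L$ is proper and yields explicit uniform constants, while your trichotomy makes the physically meaningful classification of the termination at high frequency (capacitive, resistive, inductive) explicit. I see no gap in your argument.
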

\begin{proof}
It is clearly enough to show that the function
\begin{equation} \label{AuxiliaryFunction}
    G(s) = \frac{Z(s) + \tanh{s}}{1 + Z(s) \tanh{s}}
\end{equation}
is bounded on such a vertical line where $Z(s)$ is a positive-real
analytic function in $\C_+$. We first need an observation concerning
the hyperbolic tangent, namely that
\begin{equation*}
  \mathop{Re} \tanh{(x + y i)} = \frac{\sinh{2x}}{\cosh{2x} + \cos{2y}}, \quad x,y \in \R, 
\end{equation*}
implying the estimate
\begin{equation} \label{HyperTanEstimate}
0 < 1 - \frac{1}{\sinh{2x} + 1} <  \mathop{Re} \tanh{(x + y i)} < 1 + \frac{1}{\cosh{2x} - 1}
\end{equation}
for all $x > 0$ since $\sinh{2x} < \cosh{2x}$ and $\abs{\cos{2y}} \leq
1$. In particular, $\mathop{Re} \tanh{s} > 0$ for $s \in \C_+$. A
similar computation shows that
\begin{equation*}
  \mathop{Re} \frac{1}{\tanh{(x + y i)}} = \frac{\sinh{2x}}{\cosh{2x} - \cos{2y}}, \quad x,y \in \R, 
\end{equation*}
leading to exactly the same upper and lower bounds for $\mathop{Re}
\frac{1}{\tanh{(x + y i)}}$ as are given for $\mathop{Re} \tanh{(x + y
  i)}$ in Eq.~\eqref{HyperTanEstimate}. We conclude that both
$\mathop{Re} \tanh{s}$ and $\mathop{Re} \frac{1}{\tanh{s}}$ are uniformly bounded from above and below on all vertical
lines $x_0 + i \R  \subset \C_+$ by a strictly nonnegative
constant.

\begin{enumerate}

\item{{\bf Case where $1/Z(s)$ is bounded on $x_0 + i \R$ for some $x_0 > 0$.}}

Because 
\begin{equation*}
  G(s) = \frac{1}{Z(s)} \left (1 + \left (1 + \frac{1}{Z(s)^2} \right ) \frac{1}{\frac{1}{Z(s)} + \tanh{s}} \right ),
\end{equation*}
it is enough to show that $(\frac{1}{Z(s)} + \tanh{s})^{-1}$ is
uniformly bounded from above on $x_0 + i \R$. Now
\begin{equation*}
  \abs{\frac{1}{Z(s)} + \tanh{s}} = \abs{\left (-\frac{1}{Z(s)}\right ) - \tanh{s}} >  \mathop{Re} \tanh{s} > 0 \quad \text{ for  } \quad s \in \C_+
\end{equation*}
since $-1/Z(s) \in \C_-$ and $\tanh{s} \in \C_+$. We conclude from
Eq.~\eqref{HyperTanEstimate} that
\begin{equation*}
  \frac{1}{\abs{\frac{1}{Z(s)} + \tanh{s}}} < \frac{1}{\mathop{Re} \tanh{s}}
< \frac{1}{1 - \frac{1}{\sinh{2x} + 1}} = 1 + \frac{1}{\sinh{2x}}
\end{equation*}
where $s = x + yi$, $x > 0$ and $y \in \R$ arbitrary.

\item{{\bf Case where $Z(s)$ is bounded on $x_0 + i \R$ for some $x_0 > 0$.}}

We now write 
\begin{equation*}
  G(s) =   \left ( \frac{Z(s)}{\tanh{s}} + 1 \right ) \left (\frac{1}{\tanh{s}} + Z(s) \right )^{-1}
\end{equation*}
and observe that $\frac{Z(s)}{\tanh{s}}$ is uniformly bounded from
above on $x_0 + i \R$ by using the lower estimate in
Eq.~\eqref{HyperTanEstimate}. We proceed to show that $\left (
\frac{1}{ \tanh{s}} + Z(s) \right )^{-1}$ is uniformly bounded on $x_0
+ i \R$. This time
\begin{equation*}
  \abs{\frac{1}{\tanh{s}} + Z(s)} = \abs{\left
    (-\frac{1}{\tanh{s}}\right ) - Z(s)} > \mathop{Re}
  \frac{1}{\tanh{s}} \quad \text{ for  } \quad s \in \C_+
\end{equation*}
 since $-1/\tanh{s} \in \C_-$ and $Z(s) \in
\C_+$. Again, we obtain the estimate
\begin{equation*}
\frac{1}{\abs{\frac{1}{\tanh{s}} + Z(s)}} < 1 + \frac{1}{\sinh{2x}}
\end{equation*}
where $s \in \C_+$ and $x = \mathop{Re} s$.
\end{enumerate}

We proceed to the case where $Z(s)$ is a rational function. Then
either $Z(s)$ or $1/Z(s)$ is proper. A proper transfer function is
bounded on some right half plane $x_0 + \C_+$, $x_0 > 0$. The claim of
the proposition now follows from the previous two special cases.
\end{proof}

\subsection{\label{InertialLimitSubSec} Inertial limit}

 Assume that there is a piston at the input end of the tube moving at
 the velocity $v(t) = v_0 \sin{k t}$ and acceleration $a(t) = v'(t)$.
 Then the \emph{inertial (counter) pressure} for $0 < k \ll 1$ is
 plainly given by the Newton's second law for an (nearly)
 \emph{incompressible} fluid is
\begin{equation*}
  p(t) = \frac{F(t)}{A} = \frac{m v'(t)}{A} = \frac{\rho \ell A \cdot v_0 k \cos{kt}}{A}
=  \rho \ell \cdot v_0 k \cos{kt}
\end{equation*}

from which for the inertial impedance transfer function (from volume
velocity to pressure) we get
\begin{equation} \label{InertialImpedance}
  Z_{iner}(s) = \frac{\rho \ell k v_0 \left ( \frac{s}{s^2 + k^2} \right ) }{ A v_0\left (\frac{k}{s^2 + k^2} \right )}
  = \frac{\rho \ell}{A} s = C_{iner} s
\end{equation}
where $C_{iner} = \rho \ell / A$ is the \emph{inertance} of the fluid
column having a constant intersection area $A$.

For a \emph{compressible} fluid in a column of same dimensions,
terminated to an acoustic impedance $Z_L(s)$, we get
\begin{equation*}
  Z_{ac}(s) = Z_0 \frac{Z_L(s) + Z_0 \tanh{s T_0}}{ Z_0 + Z_L(s) \tanh{s T_0}}
\end{equation*}
where the characteristic impedance is given by $Z_0 = \frac{\rho
  c}{A}$ and the transmission time $T_0 = \frac{\ell}{c}$. Note that
\begin{equation*}
  \lim_{s \to 0}{Z_{ac}(s)} = Z_L(0)
\end{equation*}
since $\tanh{0} = 0$.

\begin{definition}
We say that the termination impedance transfer function $Z_L(s)$ is
\emph{inertially feasible} if 
\begin{equation*}
  \lim_{s \to 0}{\frac{Z_{ac}(s)}{Z_{iner}(s)}} = r_{iner} \in \R.
\end{equation*}
In this case, the number $r_{iner}$ is called \emph{inertial factor}.
\end{definition}
Obviously, a necessary condition for inertial feasibility is that
$Z_L(0) = 0$. For perfectly terminated waveguides $Z_L(s) = Z_0$ we
have $Z_{ac}(s) = Z_0$ which is not an inertially feasible
termination.

Let us then compute the value of $r_{iner}$. We have
\begin{equation*}
    \frac{Z_{ac}(s)}{Z_{iner}(s)} 
    = \frac{A}{\rho \ell} \cdot \frac{Z_L(s) + Z_0 \tanh{s T_0}}{s} \cdot \frac{Z_0}{Z_0 + Z_L(s) \tanh{s T_0}}
\end{equation*}
The last term approaches to $1$ as $s \to 0$. By l'Hospital's rule, we get
\begin{equation*}
  \lim_{s \to 0}{\frac{Z_L(s) + Z_0 \tanh{s T_0}}{s}} 
  =  \lim_{s \to 0}{\left ( Z_L'(s) + Z_0 T_0 (1 - \tanh^2{s T_0}) \right )} =
  Z_L'(0) + Z_0 T_0.
\end{equation*}
Thus, 
\begin{equation}
  r_{iner} = \frac{A Z_L'(0)}{\rho \ell} + \frac{A}{\rho \ell}
  \frac{\rho c}{A} \frac{\ell}{c} = 1 + \frac{A Z_L'(0)}{\rho \ell} =
  1 + \frac{Z_L'(0)}{Z_0 T_0}.
\end{equation}
Using the inertance, we get yet another formula
\begin{equation} \label{InertanceExpression}
  r_{iner} = 1 + \frac{Z_L'(0)}{C_{iner}}. 
\end{equation}
Clearly, $r_{iner} \geq 1$ and $r_{iner} = 1$ if and only if $Z_L(0) =
Z_L'(0) = 0$.

Given the termination impedance $Z_L(s)$, the inertial factor tells
the proportion how much a fluid column of length $\ell$ must be
extended in order it to have the same inertia as a comparable
transmission line of length $\ell$ when terminated to $Z_L(s)$.

\begin{example} \label{RationalBoundaryImpedanceExample}
Let us consider a commonly used boundary impedance model, namely a
resistance and an inductance in parallel.  Then $R_L(s) = \frac{s
  RL}{R + s L}$ and $R_L'(0) = L$. We get
\begin{equation*}
  r_{iner} = 1 + \frac{AL}{\rho \ell}  = 1 + \frac{L c}{Z_0\ell} = 1 + \frac{L}{ Z_0 T_0}.
\end{equation*}
\end{example}

\begin{remark}
By Eq.~\eqref{InertanceExpression}, the acoustic inertance of a
waveguide can be tuned by external inductive loading to any value
larger than $C_{iner} = \rho \ell /A$. In fact, the rational impedance
of Example~\ref{RationalBoundaryImpedanceExample} is a sufficiently
rich class of acoustic terminations for this purpose.

However, the inductive loading not only increases the acoustic
inertance. It also moves the positions of resonant frequencies of the
system. This is inconvenient when the terminated waveguide acts as an
acoustic load in a larger system for which both the inertance and the
resonant frequencies must be controlled to some predetermined target
values.
\end{remark}

\begin{remark}
It was already pointed out that the termination of an uniform diameter
acoustic waveguide to its characteristic impedance will not result in
an inertially feasible acoustic load. This is understandable since
such a waveguide appears to be infinity long with infinite mass, and
its inertance cannot be expected to have any finite value.  In
general, acoustically nonreflecting boundary termination of \emph{any}
does not seem to be inertially feasible. However, an absorbing
boundary condition may well be a desirable feature in an acoustic
(part of a) model.
\end{remark}

\subsection{Inertial proportion}

The proportion of the characteristic impedance and the inertive
response of a wave guide is called \emph{inertial proportion} at
frequency $f$, given by
\begin{equation*}
  P_{iner}(f) = \frac{C_{iner}}{Z_0}\cdot 2 \pi f = (\rho \ell / A)
  \cdot (A / \rho c) \cdot 2 \pi f = \frac{2 \pi \ell f}{c} = \frac{2
    \pi \ell}{\lambda}
\end{equation*}
where $\lambda = c / f$ is the wavelength. For low frequencies $f$,
the number $2 \pi C_{iner} f$ approximates the acoustic impedance
$\abs{Z_{ac}(2 \pi f i)} = Z_0 \abs{\tanh{2 \pi T_0 f i}} = Z_0
\abs{\tan{2 \pi \ell/\lambda}} $ of the transmission line of length
$\ell$, terminated to a short circuit. Thus
\begin{equation*}
  P_{iner}(f) \approx \tan{\frac{2 \pi \ell}{\lambda}} = \tan{P_{iner}(f)}.
\end{equation*}
In the case of the VT, it is typical to use the value $\ell = 0.17 \,
\mathrm{m}$ and $c = 340 \, \mathrm{m}/\mathrm{s}$.  The numerical
values of $P_{iner}(f)$ and $\tan{\frac{2 \pi \ell}{\lambda}}$ for
speech relevant frequencies $f$ are given in Table~\ref{InertialPropTable}.

\begin{table}[h]
 \centerline{
    \begin{tabular}{|l|c|c|c|c|c|c|c|c|c|c|c|}
\hline
$f$ in \textrm{Hz}                 & $50$ & $75$ & $100$ & $125$ & $150$ & $175$ & $200$ & $225$ & $250$    \\
\hline
$P_{iner}$                         & 0.1571 &   0.2356 &   0.3142 &   0.3927 &   0.4712  &  0.5498  &  0.6283 &   0.7069 &   0.7854 \\
$\tan{P_{iner}(f)}$                & 0.1584  &  0.2401  &  0.3249  &   0.4142  &  0.5095  &  0.6128  &  0.7265  &  0.8541  &  1.0000 \\
Their proportions                  & 0.9918 &   0.9814  &  0.9669  &  0.9481  &  0.9249  &  0.8972  &  0.8648  &  0.8276  &  0.7854 \\
\hline
\end{tabular}}
\caption{\label{InertialPropTable} Values of inertial proportion of a
  $17 \, \mathrm{cm}$ long waveguide with its comparison values. The
  quarter wavelength frequency of such tube is $500 \, \mathrm{Hz}$,
  corresponding to $F_1$.}
\end{table}

We conclude that between $100 \ldots 200 \, \mathrm{Hz}$, the acoustic
impedance of a $17 \, \mathrm{cm}$ long, uniform diameter tube (with
Dirichlet boundary condition at the far end opening) may be
approximated by the expression $Z_{ac}(2 \pi f i) \approx P_{iner}(f)
Z_0 i$, $Z_0 = \rho c / A$ without making error larger than $15$ \% in
impedance values. This serves as a ``rule of thumb'' for accuracy
whenever we consider the acoustic impedance of a waveguide more than
one octave below its lowest resonance (here $500 \, \mathrm{Hz}$).
Note that the impedance of an infinitely long transmission line is
purely resistive $\rho c/A$ whereas the inertial transfer function
$Z_{iner} = C_{iner} s$ is purely reactive.

\section{\label{WebsterSec} Inertial limit from Webster's equation}

 We concluded above that the inertance $C_{iner}$ of a constant
 diameter fluid column is given by $C_{iner} = \rho \ell /A$ where
 $\ell$ is the length and $A$ is the intersectional area. Because mass
 inertia is an additive quantity, the \emph{flow mechanical} inertance
 of the variable diameter waveguide is obtained from this by
 integrating the infinitesimal contributions $\rho \,dx/A(x)$; i.e.,
 \begin{equation} \label{WaveguideInertance}
   C_{iner} = \rho \int_0^\ell {\frac{dx}{A(x)}}.
 \end{equation}
 We proceed to show that the same expression for the \emph{acoustic}
 inertance can be concluded from the acoustic waveguide with a varying
 area function $A(x)$, $x \in [0, \ell]$. For a mathematical treatment
 of such waveguides through Webster's partial differential equation,
 see \cite{L-M:WECAD,A-L-M:AWGIDDS,L-M:PEEWEWP}.

 Since there is no possibility of expressing the impedance transfer
 function of such a waveguide in a closed form in the same manner as
 in Section~\ref{WaveGuideSubSec}, the argument must be carried out by
 an \emph{a priori} technique -- studying the partial differential
 equation rather than its solution. This is always much more
 difficult, and we only make the computations for the trivial
 termination transfer function $Z_L(s) = 0$.

 We begin by identifying the differential equations for the impedance
 transfer function of the waveguide using boundary and system nodes
 introduced in \cite{M-S-W:HTCCS,M-S:CBCS,M-S:IPCBCS}. Any internally
 well-posed boundary node $\Xi = \left (G, L, K \right )$ (such as the
 one coming from Webster's horn model) induces an infinite-dimensional
 system node $S = \SmallSysNode$. This system node gives rise to the
 dynamical system of type
 \begin{equation}\label{SystemNodeDynamics}
 \begin{aligned}
 \dot z(t) &= A_{-1}z(t) + Bu(t), \\
      y(t) &= Cz(t) + Du(t)\quad \text{ for } t \geq 0,\\
      z(0) &= 0.
 \end{aligned}
 \end{equation}
 If $\Xi$ is also impedance passive, then \emph{semigroup generator}
 $A := L\rst{\Null{G}}$ of $S$ is maximally dissipative, and the
 \emph{transfer function} of $S$ is defined by $\IOhat(s) := \CDop
 \sbm{ (s - A_{-1})^{-1}B \\ I}$ for all $s \in \cplus$.  The transfer
 function can always be expressed in terms of the original boundary
 node $\Xi$ as follows:

 \begin{proposition}
   \label{TransferFunctionProp}
   Let $\Xi = \left (G, L, K \right )$
   be a boundary node
   associated to the operator node $S = \SmallSysNode$ whose transfer
   function is $\IOhat(\cdot)$.
   \begin{enumerate}
   \item \label{TransferFunctionPropClaim1} 
     Then $y_s = \IOhat(s) u$ for $u \in \Uscr$,
     $s \in \rho(A)$, and $y_s \in \Yscr$ if and only if there exists a
     (unique) $z_s \in \Dom{\Xi}$ such that 
     \begin{equation} \label{TransferFunctionEq} \bbm{G \\ L \\
    K}z_s = \bbm{u \\ s z_s \\ y_s}.
     \end{equation}
   \item 
     \label{TransferFunctionPropClaim2} 
     Then $y_s = \IOhat'(s) u$ for $u \in \Uscr$, $s \in \rho(A)$, and
     $y_s \in \Yscr$ if and only if there exists a (unique) $z_s \in
     \Dom{\Xi}$ and $x_s \in \Null{G}$ such that
     \begin{equation} \label{TransferFunctionEq2} \bbm{G \\ L }z_s = \bbm{u \\ s z_s}
       \text{ and } \bbm{s - L \\ - K}x_s = \bbm{z_s \\ y_s}.
     \end{equation}
   \end{enumerate}
   In fact, $z_s \in \cap_{k \geq 1}{\Dom{L^k}}$.
 \end{proposition}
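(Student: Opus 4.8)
The plan is to reduce everything to the defining relationship between the boundary node $\Xi = (G,L,K)$ and the operator node $S = \SmallSysNode$ it generates, as set up in \cite{M-S-W:HTCCS,M-S:CBCS,M-S:IPCBCS}: that $S$ acts on graph pairs by $\AB \sbm{z \\ Gz} = Lz$ and $\CDop \sbm{z \\ Gz} = Kz$ for $z \in \Dom{\Xi}$, and that for $s \in \rho(A)$ and $u \in \Uscr$ the vector $(s-A_{-1})^{-1}Bu$ equals the unique $z_s \in \Dom{\Xi}$ with $Gz_s = u$ and $Lz_s = sz_s$. I would first establish this existence and uniqueness directly: surjectivity of $G$ produces some $z_0 \in \Dom{\Xi}$ with $Gz_0 = u$, and any solution has the form $z_0 + w$ with $w \in \Null{G} = \Dom{A}$, so that $Lz_s = sz_s$ is equivalent to $(s-A)w = -(s-L)z_0 \in \Xscr$, which is uniquely solvable because $s \in \rho(A)$; uniqueness of $z_s$ then follows by applying the invertible operator $s - A$ to the difference of two solutions. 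Inserting $z_s$ into the definition $\IOhat(s) = \CDop \sbm{(s-A_{-1})^{-1}B \\ I}$ gives $\IOhat(s)u = \CDop \sbm{z_s \\ u} = Kz_s$.

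With this identity the first assertion is immediate in both directions. If $y_s = \IOhat(s)u$, the $z_s$ just described lies in $\Dom{\Xi}$ and satisfies all three rows of \eqref{TransferFunctionEq}. Conversely, if some $z_s \in \Dom{\Xi}$ satisfies \eqref{TransferFunctionEq}, then in particular $Gz_s = u$ and $Lz_s = sz_s$, so $z_s = (s-A_{-1})^{-1}Bu$ by uniqueness and $y_s = Kz_s = \IOhat(s)u$. The final regularity claim also drops out here: from $Lz_s = sz_s \in \Dom{\Xi} \subseteq \Dom{L}$ one obtains by induction that $L^k z_s = s^k z_s$ is defined for every $k \geq 1$, so $z_s \in \bigcap_{k \geq 1}\Dom{L^k}$.

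For the second assertion the idea is to differentiate $s \mapsto z_s$ and track where the derivative lands. The map $s \mapsto z_s = (s-A_{-1})^{-1}Bu$ is analytic into $\Xscr$, and since $Gz_s = u$ is constant and $Lz_s = sz_s$, it is in fact analytic in the graph norm of $\Dom{\Xi}$, so $Kz_s$ is analytic and $\IOhat'(s)u = K\frac{d}{ds}z_s$. Now $\frac{d}{ds}z_s = -(s-A_{-1})^{-2}Bu = -(s-A_{-1})^{-1}z_s$, and because $z_s \in \Xscr$ this equals $-(s-A)^{-1}z_s =: -x_s$ with $x_s \in \Dom{A} = \Null{G}$. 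The resolvent equation $(s-A)x_s = z_s$ reads $(s-L)x_s = z_s$ since $L$ restricts to $A$ on $\Null{G}$, while $Gx_s = 0$ and $\IOhat'(s)u = -Kx_s = y_s$; together these are exactly the two equations of \eqref{TransferFunctionEq2}. Uniqueness of $x_s$ given $z_s$ is again obtained by applying $s - A$ to a difference of solutions, and the converse direction follows from this uniqueness just as for the first assertion.

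The main difficulty is bookkeeping at the junction of the two descriptions rather than any new estimate: one must keep straight which resolvent is acting where — in particular that $(s-A_{-1})^{-1}B$ genuinely maps $\Uscr$ into $\Xscr$, so that the inner factor of $(s-A_{-1})^{-2}B$ is the true resolvent of $A$ with range in $\Dom{A} = \Null{G}$, the set on which $\CDop \sbm{\,\cdot\, \\ 0}$ reduces to $K$ — and that the graph-pair identities $\AB \sbm{z \\ Gz} = Lz$, $\CDop \sbm{z \\ Gz} = Kz$ are invoked only on $\Dom{\Xi}$. The single analytic point that needs a line of care is upgrading the analyticity of $s \mapsto z_s$ from $\Xscr$ to the graph norm of $\Dom{\Xi}$, which is what licenses pulling $\frac{d}{ds}$ through the (graph-norm bounded) operator $K$.
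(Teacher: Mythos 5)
Your argument is correct and follows essentially the same route as the paper: both identify $z_s = (s-A_{-1})^{-1}Bu$ as the unique element of $\Dom{\Xi}$ with $Gz_s = u$ and $Lz_s = sz_s$ (you derive this from surjectivity of $G$ and $s \in \rho(A)$, the paper from $A_{-1} = L - BG$ and $G(s-A_{-1})^{-1}B = I$, which amounts to the same thing), and both obtain claim (ii) by differentiating the resolvent and using that $(s-A_{-1})^{-1}$ restricted to $\Xscr$ equals $(s-A)^{-1}$ with range in $\Null{G}$. Your extra paragraph justifying $\IOhat'(s)u = K\,\frac{d}{ds}z_s$ via graph-norm analyticity is a welcome refinement of a step the paper performs by formally differentiating $\CDop\sbm{(s-A_{-1})^{-1}B \\ I}$, but it does not change the substance of the proof.
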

 \begin{proof}
   We use the following relations between the operators in $\Xi$ and
   $S$: $\CDop = \bbm{K & 0} | \Dom{S}$, $A = L | \Null{G}$, $A_{-1} =
   L - BG$ and $G(s - A_{-1})^{-1}B = I$.
   
   Claim~\eqref{TransferFunctionPropClaim1}: Now $y_s = \IOhat(s) u$ if
   and only if $y_s = \CDop \sbm{ (s - A_{-1})^{-1}B \\ I}u$ if and only
   if $y_s = Kz_s$ where $z_s = (s - A_{-1})^{-1}Bu$ if and only if
   $y_s = Kz_s$ where $(s - A_{-1})z_s = Bu_s$ if and only if $y_s =
   Kz_s$ where $(s - L)z_s = B(u - Gz_s)$.  But always $Gz_s = G(s -
   A_{-1})^{-1}Bu = u$, and hence $y_s = \IOhat(s) u$ is equivalent
   with the solvability of $z_s$ in \eqref{TransferFunctionEq}.
   Because $\Xi$ is a boundary node, $\sbm{G \\ s - L}$ is injective,
   and the solution $z_s$ of \eqref{TransferFunctionEq} is unique.
   
   Claim~\eqref{TransferFunctionPropClaim2}: This time $y_s =
   \IOhat'(s) u$ if and only if 
   \begin{equation*}
     y_s = - \CDop \bbm{(s - A)^{-1} (s -
       A_{-1})^{-1}B \\ I}u = K (s - A)^{-1} \cdot (s - A_{-1})^{-1}B u = K x_s
   \end{equation*}
   where $x_s = (s - A)^{-1} z_s$ and $z_s = (s - A_{-1})^{-1}B u$. By
   Claim~\eqref{TransferFunctionPropClaim1}, the vector $z_s$ is the
   unique solution of the first equation in
   \eqref{TransferFunctionEq2}. Moreover, we have $x_s \in \Null{G} =
   \Dom{A}$, and thus $(s - L) x_s = (s - A) x_s = z_2$ being
   equivalent with the second equation in \eqref{TransferFunctionEq2}.
   
   It follows from \eqref{TransferFunctionEq} that $z_s = (s -
   A_{-1})^{-1}Bu$ satisfies $Lz_s = sz_s$, and hence $z_s \in \cap_{k
     \geq 1}{\Dom{L^k}}$.
 \end{proof}

It remains to apply Proposition~\ref{TransferFunctionProp} to the
Webster's waveguide model given by
\begin{equation} \label{WebstersEqBnrCtrl} 
  \begin{cases}
    & \psi_{tt} = \frac{c^{2}}{A(s)} \frac{\partial}{\partial s}
     \left ( A(s) \frac{\partial \psi }{\partial s} \right )  \text{ for }  s \in (0,\ell) \text{ and } t \in \rplus, \\
    & - A(0) \psi_s(0,t)   =  i_0(t) \quad \text{ for } t \in \rplus, \\
    &  \psi_t(\ell,t) = \psi(\ell,t) = 0 \quad  \text{ for } t \in \rplus.
\end{cases}
\end{equation}
together with the observed signal
\begin{equation}\label{WebstersEqBnrCtrlObs}
  p_0(t) = \rho \psi_t(0,t)  \text{ for } t \in \rplus.
\end{equation}
Using the operator
\begin{equation*} \label{WebsterOpDef}
  W := \frac{1}{A(x)} \frac{\partial}{\partial x} \left ( A(x)
  \frac{\partial}{\partial x} \right ),
\end{equation*}
the first of the equations in
\eqref{WebstersEqBnrCtrl} can be cast into first order form by using
the rule
\begin{equation} \label{WebsterMainOpDef}
   \psi_{tt} = c^{2} W \psi \quad \hat{=} \quad \frac{d}{dt} \bbm{\psi \\ \pi} 
   = L  \bbm{\psi \\ \pi} \text{ where } L := \bbm{0 & \rho^{-1} \\ \rho c^2 W & 0}.
\end{equation}
Note that the rule implies $\pi = \rho \psi_t$.  We have $L :\Zscr \to
\Xscr$ where the two Hilbert spaces are given by
\begin{equation} \label{WebstersHilbertSpaces}
  \Zscr := H\sp{2}_{\{\ell \}}(0,\ell) \times H\sp{1}_{\{\ell \}}(0,\ell), \quad \Xscr :=
  H\sp{1}_{\{\ell \}}(0,\ell) \times L\sp{2}(0,\ell)
\end{equation}
where the subindex $\{\ell \}$ denotes the Dirichlet boundary
condition at $\ell$.
The endpoint control and observation operators $G:\Zscr \to \C$ and
$K:\Zscr \to \C$ are defined by
\begin{equation*}
   G \sbm{w_1 \\ w_2} := - A(0) w_1'(0)  \quad \text{ and } \quad
   K \sbm{w_1 \\ w_2} :=  w_2(0) .
\end{equation*}
Now the Webster's horn model
\eqref{WebstersEqBnrCtrl}--\eqref{WebstersEqBnrCtrlObs} for the state
$z(t) = \sbm{\psi(t) \\ \pi(t)}$ takes the form
\begin{align} \label{WebstersDiffEqNoF}
\begin{cases}
   \frac{d}{dt}  \sbm{\psi(t) \\ \pi(t)} &  =  L \sbm{\psi(t)
    \\ \pi(t)},   \\
    G \, \sbm{\psi(t) \\ \pi(t)}  & = i_0(t),
\end{cases}
\end{align}
and 
\begin{equation}
  \label{WebstersDiffEqNoFObs}
  p_0(t) = K \sbm{\psi(t) \\ \pi(t)}
\end{equation}
for all $t \in \rpluscl$. We have now constructed a (impedance
passive, internally well-posed) boundary node $\Xi = \left ( G, L, K
\right )$ whose transfer function $\IOhat(s) = Z_{ac}(s)$ is the
impedance of the acoustic waveguide when the far end has been
terminated to a Dirichlet boundary condition.  We now wish to compute
$Z_{ac}(0)$ and $Z_{ac}'(0)$, leading to the following result.

\begin{theorem}
  The trivial termination transfer function $Z_L(s) = 0$ is inertially
  feasible for the acoustic waveguide described by
  \eqref{WebstersEqBnrCtrl}--\eqref{WebstersEqBnrCtrlObs}, and the
  inertial factor satisfies $r_{iner} = 1$.
\end{theorem}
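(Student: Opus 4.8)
\emph{Proof proposal.} The plan is to verify inertial feasibility straight from the definition: I will compute the first two Taylor coefficients of $Z_{ac}(s)=\IOhat(s)$ at $s=0$ using the boundary-node characterisation of Proposition~\ref{TransferFunctionProp}, show that $Z_{ac}(0)=0$ and that $Z_{ac}'(0)$ equals the flow-mechanical inertance $C_{iner}=\rho\int_0^\ell dx/A(x)$ of Eq.~\eqref{WaveguideInertance}, and then conclude. Indeed, once $Z_{ac}(0)=0$, l'Hospital's rule gives $\lim_{s\to0}Z_{ac}(s)/Z_{iner}(s)=\lim_{s\to0}Z_{ac}(s)/(C_{iner}s)=Z_{ac}'(0)/C_{iner}=1$, which is precisely $r_{iner}=1$. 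For a constant area this is consistent with Eq.~\eqref{InertanceExpression} with $Z_L\equiv0$, but for nonconstant $A(\cdot)$ the identity $Z_{ac}'(0)=C_{iner}$ has to be extracted from the PDE~\eqref{WebstersEqBnrCtrl} itself.

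For $Z_{ac}(0)$ I would apply the first claim of Proposition~\ref{TransferFunctionProp} at $s=0$: given $u\in\C$, find $z_0=\sbm{w_1\\w_2}\in\Dom{\Xi}$ with $Lz_0=0$, $Gz_0=u$ and output $Kz_0=Z_{ac}(0)u$. Reading off $L$ from Eq.~\eqref{WebsterMainOpDef}, the condition $Lz_0=0$ forces $w_2=0$ and $Ww_1=0$, i.e.\ $(A w_1')'=0$, so $A(\cdot)w_1'(\cdot)$ is constant; the control relation $Gz_0=-A(0)w_1'(0)=u$ pins that constant to $-u$, and integrating $w_1'(x)=-u/A(x)$ with the Dirichlet condition $w_1(\ell)=0$ yields $w_1(x)=u\int_x^\ell ds/A(s)$. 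The output is then $Kz_0=w_2(0)=0$, so $Z_{ac}(0)=0$; in particular the necessary condition $Z_L(0)=0$ for inertial feasibility holds for $Z_L\equiv0$.

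For $Z_{ac}'(0)$ I would apply the second claim of Proposition~\ref{TransferFunctionProp} at $s=0$: with $z_0=\sbm{w_1\\0}$ the solution just found, find $x_0=\sbm{q_1\\q_2}\in\Null{G}$ with $-Lx_0=z_0$ and $Z_{ac}'(0)u=-Kx_0$. The equation $-Lx_0=z_0$ unpacks to $q_2=-\rho w_1$ and $Wq_1=0$; since $x_0\in\Null{G}$ means $q_1'(0)=0$ and $q_1$ satisfies the Dirichlet condition at $\ell$, the equation $Wq_1=0$ forces $A(\cdot)q_1'(\cdot)\equiv A(0)q_1'(0)=0$, hence $q_1\equiv0$. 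Therefore $Z_{ac}'(0)u=-Kx_0=-q_2(0)=\rho w_1(0)=\rho u\int_0^\ell ds/A(s)$, i.e.\ $Z_{ac}'(0)=\rho\int_0^\ell ds/A(s)=C_{iner}$, as claimed.

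The PDE computations above are not the hard part; the main obstacle is the functional-analytic bookkeeping needed to invoke Proposition~\ref{TransferFunctionProp} at $s=0$, which sits on the boundary of $\cplus$ rather than in the interior resolvent set $\rho(A)$ appearing in its hypotheses. One has to verify $0\in\rho(A)$ for $A=L\rst{\Null{G}}$: injectivity follows from the computation above (the homogeneous Neumann-at-$0$/Dirichlet-at-$\ell$ problem has only the zero solution), and solvability of $A\sbm{w_1\\w_2}=\sbm{f_1\\f_2}$ reduces to the uniquely solvable two-point problem $(A w_1')'=A f_2/c^2$ with $w_1'(0)=0$, $w_1(\ell)=0$; together with maximal dissipativity of $A$ this gives $0\in\rho(A)$. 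One should also confirm that $Z_{ac}$ is genuinely (complex-)differentiable at $s=0$, so the second claim delivers $Z_{ac}'(0)$ and not merely a one-sided derivative, and that the standing assumptions keep $A(\cdot)$ bounded away from $0$ so that $\int_0^\ell ds/A(s)<\infty$. With these points settled, the two steps above give $r_{iner}=1$.
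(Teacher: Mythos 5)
Your proposal is correct and follows essentially the same route as the paper: apply Claim (i) of Proposition~\ref{TransferFunctionProp} at $s=0$ to get $z_0=\sbm{\int_x^\ell dr/A(r)\\0}$ and hence $Z_{ac}(0)=Kz_0=0$, then apply Claim (ii) to get $Z_{ac}'(0)=\rho\int_0^\ell dr/A(r)=C_{iner}$ and conclude $r_{iner}=1$. Your additional remarks on verifying $0\in\rho(A)$ (since $s=0$ lies on the boundary of $\cplus$) address a point the paper passes over silently, but they do not change the argument.
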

In other words, for Dirichlet terminated general acoustic waveguide,
the flow mechanical inertance and the acoustic inertance coincide.
Thus, Eq.~\eqref{InertanceExpression} holds for general acoustic
waveguides in the special case $Z_L(s) = 0$.
\begin{proof}
  We must show that $Z_{ac}(0) = 0$ and $Z_{ac}'(0) = C_{iner}$ where
  $C_{iner}$ is given by \eqref{WaveguideInertance}. The first step in
  this direction is to solve $\sbm{G \\ L }z_0 = \sbm{1 \\ 0 \\ 0}$.
  Writing $z_0 = \sbm{w_1 \\ w_2}$, we get the differential equation
  \begin{equation*}
    \begin{cases}
      & - A(0) w_1'(0) = 1 \\
      & \rho^{-1} w_2 = 0  \\
      &  \frac{1}{A(x)} \frac{\partial}{\partial s} \left ( A(x) 
  \frac{\partial w_1}{\partial x} \right ) = 0. 
    \end{cases}
  \end{equation*}
Thus $w_2(x) = 0$ and $w_1(x) = C_1 \int_{0}^x {\frac{d r}{A(r)}} +
C_2$ for $x \in [0, \ell]$. The boundary condition $w_1'(0) = -1/A(0)$
$C_1 = -1$, and $w_1(\ell) = 0$ implies $C_2 = \int_{0}^\ell {\frac{d
    r}{A(r)}}$. Thus, we have $z_0(x) = \sbm{\int_{x}^\ell {\frac{d
      r}{A(r)}} \\ 0}$. Trivially $K Z_0 = 0$, and $Z_{ac}(0) = 0$
follows from Claim~\eqref{TransferFunctionPropClaim1} of
Proposition~\ref{TransferFunctionProp}.

Writing now $x_0 = \sbm{w_1 \\ w_2}$, equation $\sbm{-L \\ -K }x_0 =
z_0 = \sbm{\int_{x}^\ell {\frac{d r}{A(r)}} \\ 0 \\ y}$ implies
$w_2(x) = - \rho \int_{x}^\ell {\frac{d r}{A(r)}}$. By
Claim~\eqref{TransferFunctionPropClaim2} of
Proposition~\ref{TransferFunctionProp} we get $Z'_{ac}(0) = y = - K
w_2 = \rho \int_{0}^\ell {\frac{d r}{A(r)}} = C_{iner}$. This
completes the proof.
\end{proof}

It remains an open question whether Eq.~\eqref{InertanceExpression}
can be generalised to general acoustic waveguides for any termination
impedance satisfying $Z_L(0) = 0$ and $Z_L'(0) \in \R$ . An educated
guess is that this can be done using a same kind but a more
complicated form of reasoning as presented in this section.

\end{document}